\newcommand{\NNG}{\ensuremath{\mathcal{N}}}
\newtheorem{theorem}{Theorem}
\newaliascnt{lemma}{theorem}
\newtheorem{lemma}[lemma]{Lemma}
\newaliascnt{corollary}{theorem}
\newtheorem{corollary}[corollary]{Corollary}
\theoremstyle{definition}
\newaliascnt{definition}{theorem}
\newtheorem{definition}[definition]{Definition}
\title{Maximizing the Sum of Radii of Disjoint Balls or Disks}
\author{David Eppstein\thanks{Computer Science Department,
        University of California, Irvine; Irvine, CA, 92697, USA; {\tt eppstein@uci.edu}. Research supported in part by NSF grants  CCF-1228639, CCF-1618301, and CCF-1616248.}}
\begin{document}
\thispagestyle{empty}
\maketitle

\begin{abstract}
Finding nonoverlapping balls with given centers in any metric space, maximizing the sum of radii of the balls, can be expressed as a linear program. Its dual linear program expresses the problem of finding a minimum-weight set of cycles (allowing 2-cycles) covering all vertices in a complete geometric graph. For points in a Euclidean space of any finite dimension~$d$, with any convex distance function on this space, this graph can be replaced by a sparse subgraph obeying a separator theorem. This graph structure leads to an algorithm for finding the optimum set of balls in time $O(n^{2-1/d})$, improving the $O(n^3)$ time of a naive cycle cover algorithm. As a subroutine, we provide an algorithm for weighted bipartite matching in graphs with separators, which speeds up the best previous algorithm for this problem on planar bipartite graphs from $O(n^{3/2}\log n)$ to $O(n^{3/2})$ time. We also show how to constrain the balls to all have radius at least a given threshold value, and how to apply our radius-sum optimization algorithms to the problem of embedding a finite metric space into a star metric minimizing the average distance to the hub.
\end{abstract}

\section{Introduction}

In this paper we consider the following problem. We are given a set of points in the plane, and must choose a radius for each point, so that the disks with the given radii do not overlap and the sum of radii is as large as possible.
The same problem can be generalized to arbitrary metric spaces, with some care about definitions to allow us to work with finite metric spaces (such as a space whose only points are the given centers).
Although we believe these problems to be natural and interesting on their own merits, we provide two motivating applications, in map labeling and in optimal embedding of metric spaces into star metrics.

To formally define the problems we study, we use the following definition.

\begin{definition}
We define a \emph{ball}, in a given metric space, to be a pair $(c,r)$ where $c$ is a point in the metric space (the center of the ball), and $r$ is a non-negative real number (the radius of the ball). A ball $(c,r)$ contains a point $p$ if the distance from $c$ to $p$ is less than $r$.
Two balls in an arbitrary metric space \emph{overlap} when  their sum of radii exceeds the center distance, and are \emph{nonoverlapping} otherwise.
If the sum of radii equals the center distance, we say that the two balls \emph{touch}.
\end{definition}

For a \emph{geodesic} metric space such as a Euclidean space, in which every two points can be joined by an isometrically embedded line segment, this definition coincides with the usual intuition: balls are distinct when they contain distinct sets of points, and overlap when they have a point of intersection. However, these definitions allow us to work more conveniently with non-geodesic metric spaces, such as a finite metric space consisting only of the ball centers. In such spaces, distinct balls may contain the same sets of points, and overlapping balls may not have any points that they both contain.

Then in any metric space, finding a set of nonoverlapping metric balls with $n$ given center points $p_i$, maximizing the sum of non-negative radii $r_i$, can be expressed as a linear program. The objective is to maximize the linear function $\sum r_i$, subject to constraints that each pair of balls remain nonoverlapping, i.e. that $r_i+r_j\le d(p_i,p_j)$. This linear program has two variables per constraint, a well-studied special case of linear programming. But although strongly polynomial algorithms for feasibility with two variables per constraint are known~\cite{CohMeg-SICOMP-94,HocNao-SICOMP-94}, they do not extend to optimization, and their running time is higher than might be desired.
Therefore, it remains of interest to find a purely combinatorial algorithm for the problem, with as low a running time as possible.

In this paper, we provide such a combinatorial algorithm, running in cubic time for general metrics and subquadratic time for low-dimensional Euclidean spaces.

\begin{figure}[t]
\centering\includegraphics[scale=0.45]{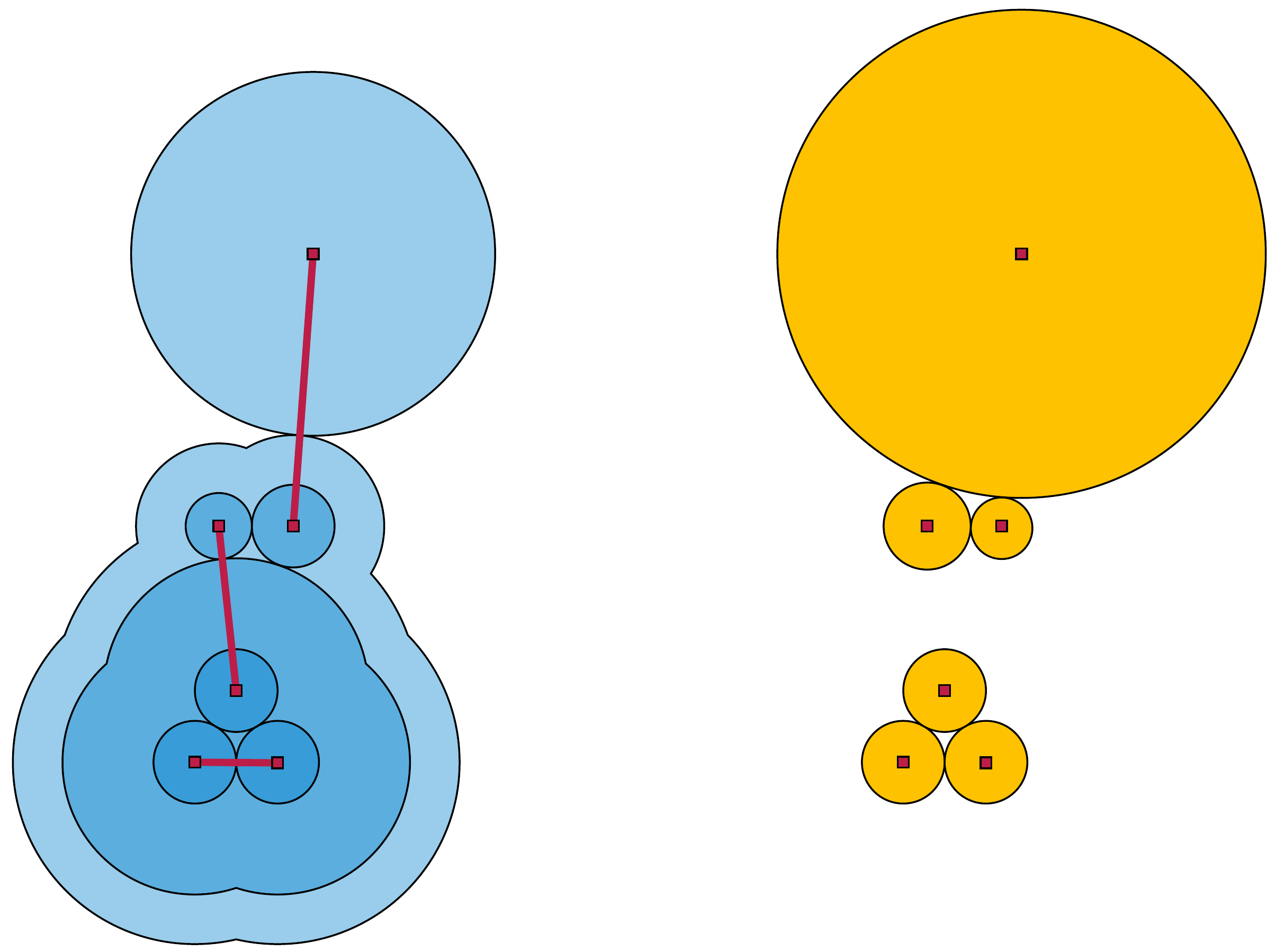}
\caption{The moat-growing method for Euclidean matching (left) versus the set of disks with maximum sum of radii centered at the same set of points (right).}
\label{fig:moats}
\end{figure}

The problem that we study here should be distinguished from the classical ``moat-growing'' method for solving Euclidean minimum weight matching problems, in which we grow disks around points until they touch but then, around any subset of an odd number of points with touching regions, continue growing a ``moat''. This method was illustrated, for instance, on the cover of a combinatorial optimization book by Cook et al.~\cite{CooCunPul-CO-98}. In contrast, our method has only disks, with no moats, and when it finds a subsystem of an odd number of points with touching disks, those disks are frozen in place, with no possibility of future growth (\autoref{lem:odd-rad}).
Although similar in visual appearance to moat-growing (\autoref{fig:moats}), the solution to our problem can have a substantially smaller value than the solution to the Euclidean minimum weight matching problem. The radius-sum optimization problem that we consider here was formulated earlier by J\"unger and Pulleyblank~\cite{JunPul-Algo-95}, as a type of geometric dual problem to Euclidean minimum weight matching, but they quickly dismissed it as inaccurate before formulating tighter moat-growing-based heuristics for matching, and they did not provide an algorithm for finding the optimal radii.

\subsection{New results}
We prove the following results:
\begin{itemize}
\item Finding metric balls with maximum sum of radii is equivalent under linear programming duality to finding a minimum-length set of cycles (allowing $2$-cycles) that cover all vertices of the complete geometric graph on the given centers. The maximum sum of radii equals half of the minimum total cycle length. By reducing cycle covers to weighted matching, the optimal set of balls in any metric space can be constructed in cubic time.
\item For points in Euclidean spaces of bounded dimension $d$, the edges of the optimal cycle cover can be found in a subgraph of the complete geometric graph, the intersection graph of nearest neighbor balls of the points. This graph is sparse, having $O(n)$ edges with a constant of proportionality depending singly-exponentially in the dimension. Moreover, it obeys a separator theorem with separators of size $O(n^{1-1/d})$, and the graph and its separator decomposition can be constructed in time $O(n\log n)$.
\item A separator-based divide and conquer weighted matching algorithm can find an optimal cycle cover and set of nonoverlapping balls in time $O(n^{2-1/d})$. In particular,  we can find an optimal set of disks in the plane in time $O(n^{3/2})$.
\item As a subroutine, we use a new weighted bipartite matching algorithm for families of graphs that obey a separator theorem, which runs in time $O(n^{3/2})$ for planar bipartite graphs, improving a previous bound of $O(n^{3/2}\log n)$ for this case.
\item We show how to solve radius-sum optimization problems in which all disks or balls have a minimum radius greater than zero, by transforming these problems into unconstrained radius-sum optimization problems on a different metric space.
\end{itemize}
To avoid issues of numerical precision we consider only strongly-polynomial-time algorithms, in a model of computation in which distance computation, arithmetic, and comparisons take constant time per operation.

\subsection{Related research}

Along with the map labeling research discussed below, researchers have studied other types of geometric optimization problems in which the optimization criterion is a sum of radii of balls or circles.
These include finding a set of $k$ balls with given centers, drawn from a larger set of $n$ points, that cover all points and minimize the sum of radii of the balls~\cite{GibKanKro-Algo-10}, finding a connected set of disks in the plane with given centers that minimize the sum of radii~\cite{ChaFekHof-WADS-11}, and finding both a collection of disks centered at a subset of input points that covers all input points, and a tour connecting the disk centers, minimizing a combination of disk radii and tour length~\cite{AltArkBro-SoCG-06}.

\section{Applications}
\subsection{Map labeling}
\label{sec:label}

\emph{Map labeling} is an algorithmic problem in which one must place non-overlapping text labels on maps or other  visualizations.
Researchers of map labeling have studied various problems of assigning shapes of maximum size to a given set of points in the plane~\cite{DodMarMir-SODA-97,WagWol-CGTA-97,JiaBerQin-ISAAC-04,Str-IJCGA-01}. One simple case of this problem, finding circles centered at the given points that maximize the minimum radius, can be solved by finding the closest pair of points and setting all radii equal to half of this pair's distance. However, this measure of solution quality penalizes the label sizes even for points far away from the closest pair, where larger radii could be used without overlap (\autoref{fig:compare-opt-criteria}, left). On the other hand, an $L_2$ measure of solution quality (maximizing the sum of disk areas) would also be unsatisfactory: even with only two points to be labeled, the $L_2$ solution would assign zero radius to one point (\autoref{fig:compare-opt-criteria}, center). The problem that we study in this paper is the $L_1$ version of this problem, in which we are given as input $n$ points and must maximize the sum of radii of disjoint disks centered at those points (\autoref{fig:compare-opt-criteria}, right). Two points can be optimally labeled by two equal-radius disks, and this criterion has the largest value of $p$ among $L_p$ criteria (maximizing sums of $p$th powers of the radii) that allow this equal-radius solution.

\begin{figure}[t]
\centering
\includegraphics[width=0.3\textwidth]{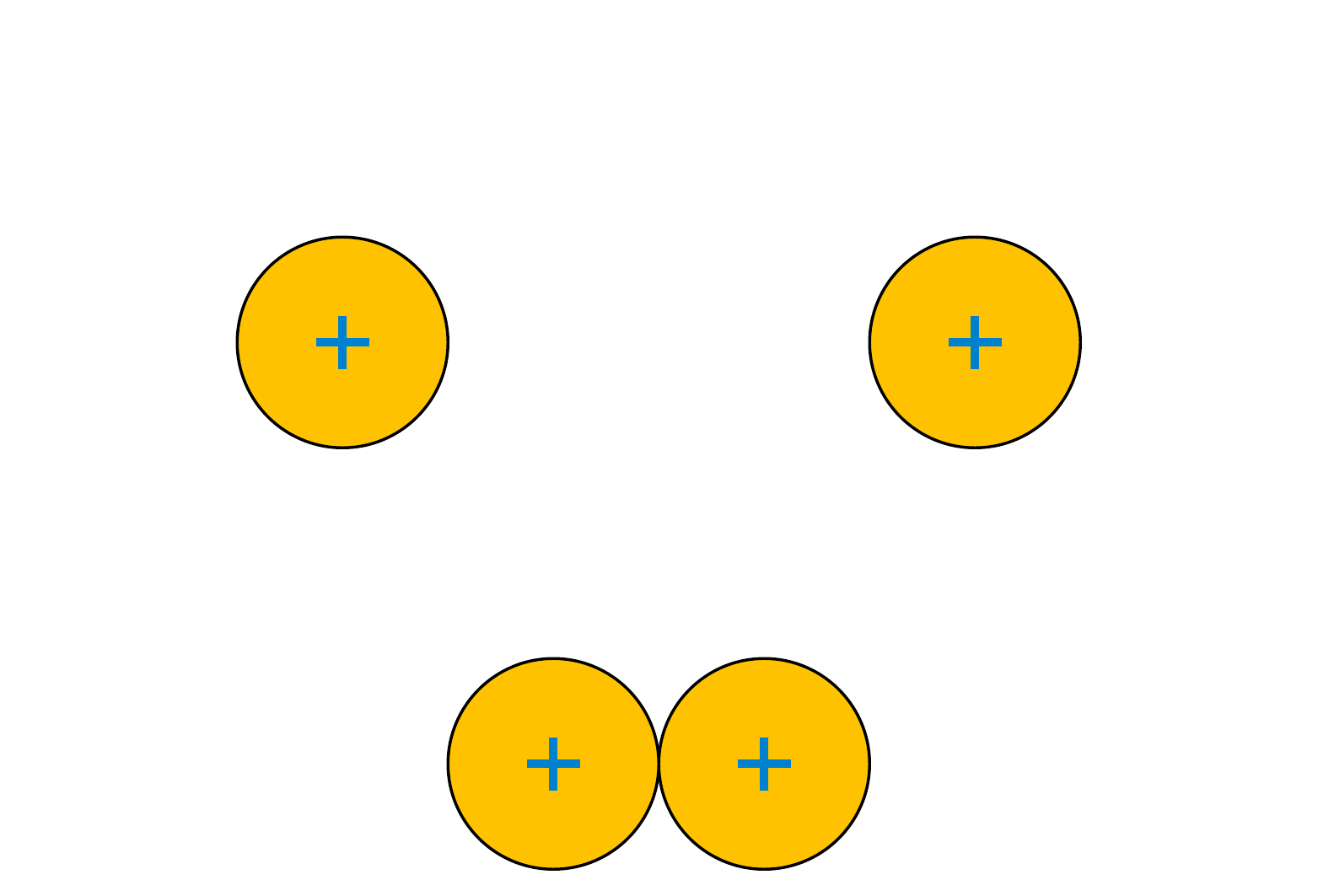}\quad
\includegraphics[width=0.3\textwidth]{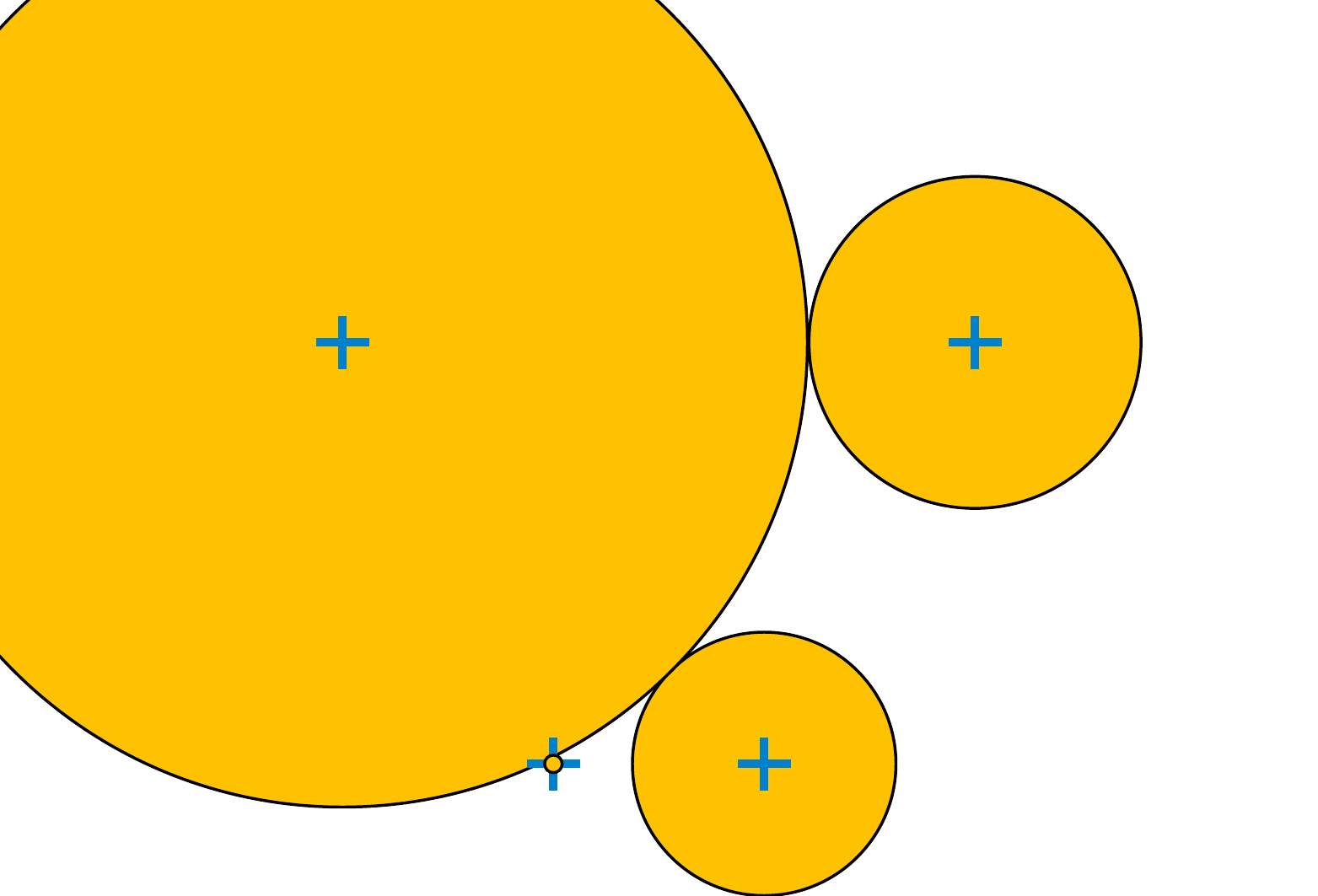}\quad
\includegraphics[width=0.3\textwidth]{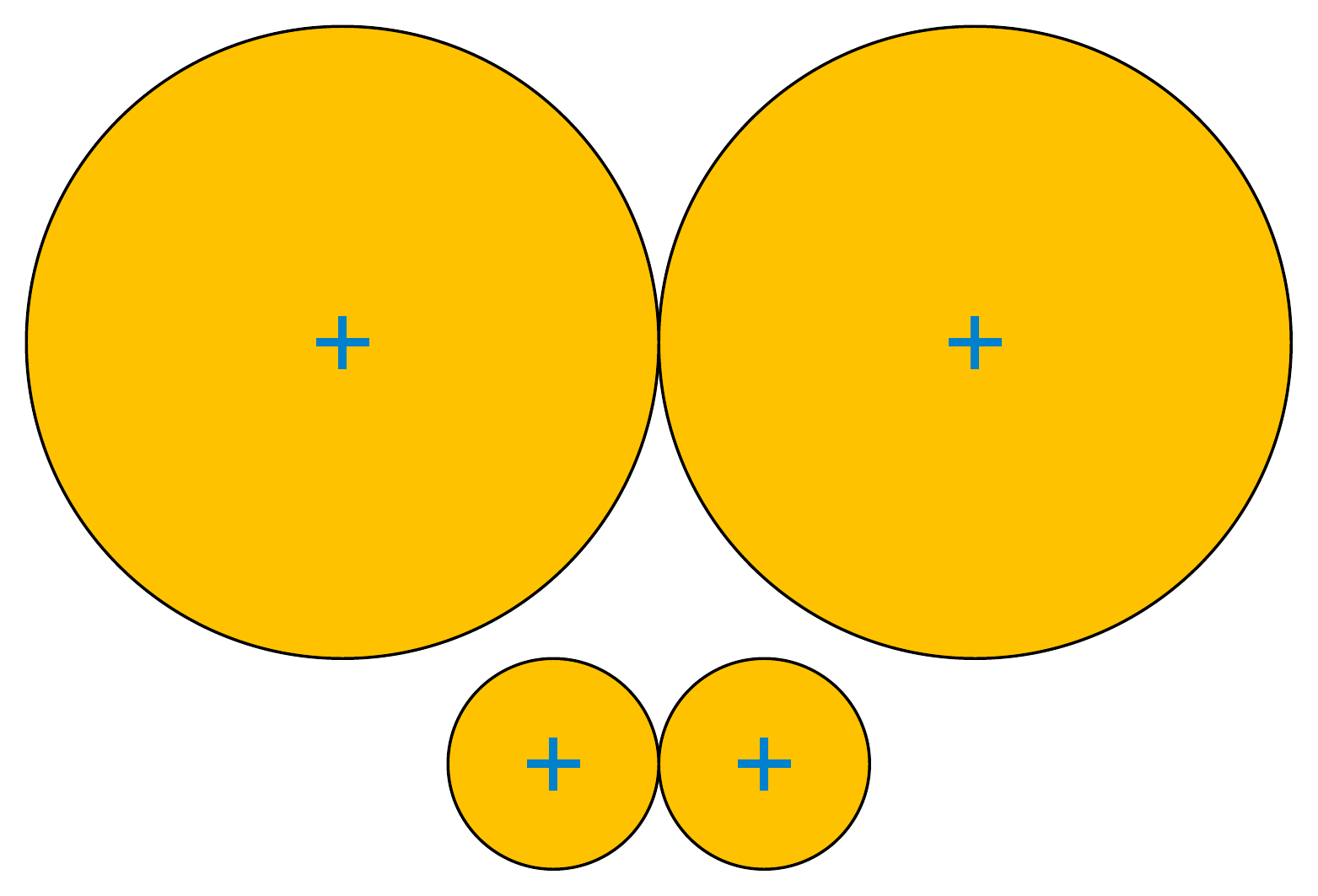}
\caption{For disks with fixed centers, maximizing the minimum radius (left) leads to disks that are smaller than necessary, while maximizing the sum of areas (center) leads to degenerate zero-radius disks. Maximizing the sum of radii (right) avoids both problems.}
\label{fig:compare-opt-criteria}
\end{figure}

We remark, however, that the algorithms we present here do not provide a complete solution to the map labeling problem. One reason for this is that there may be multiple optimal families of disks for a given collection of points, all with the same sum of radii, and that map labeling would require us to select a single radius for each point. For instance, with only two points, any pair of radii whose sum os the distance between the points would be optimal; in this case, the preferred solution for map labeling would likely be the one in which the two radii are equal. We leave the problem of how to formulate the best choice among the optimum solutions to our radius sum maximization problem, and of finding this best choice by a fast combinatorial algorithm, as open for future research.

Another, more serious, problem with applying radius sum optimization to map labeling is that, for more than two disks, the optimal solution may still involve disks of zero radius. In particular, when the input consists of three collinear points, the optimal solution assigns zero radius to the middle point, with the other two points having radii equal to their distance from the middle point (\autoref{fig:collinear}). One way to work around this issue, and ensure that all disks have non-zero radius, is to constrain all of the disks to have radius at least $\delta$, for some threshold value $\delta>0$. We may choose $\delta$ to be as large as half of the minimum distance between the given points. We then seek a system of radii $r_i\ge\delta$ that form non-overlapping disks of maximum total radius. As we show in \autoref{sec:constrained}, this constrained version of the problem, in which we have an additional lower bound on the radii of the balls, can be solved efficiently by transforming it into an unconstrained problem on a different distance metric.

\begin{figure}[t]
\centering\includegraphics[scale=0.5]{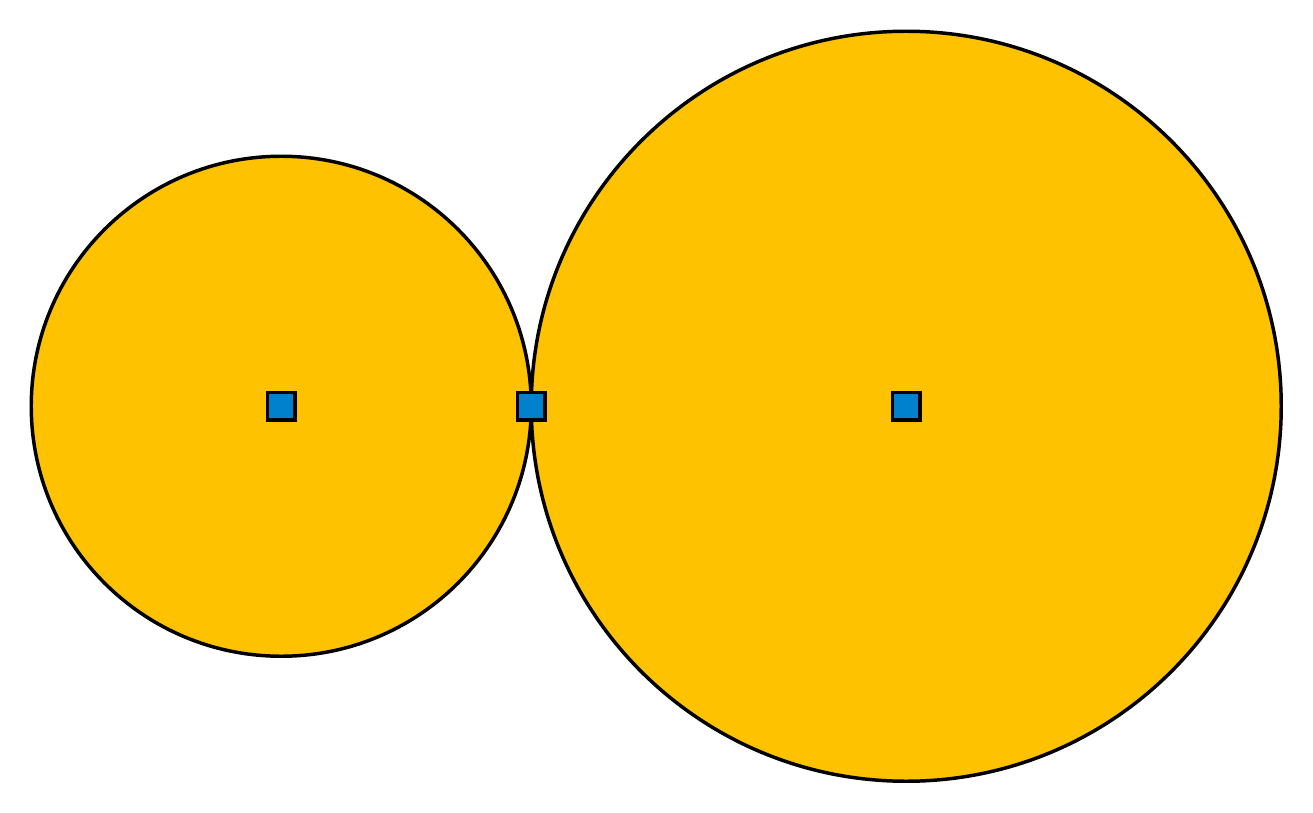}
\caption{The maximum radius sum for three collinear points assigns zero radius to the middle point.}
\label{fig:collinear}
\end{figure}

\subsection{Metric embedding into stars}

There has been extensive research on embedding complex metrics into simpler metrics
with low distortion; such methods have many applications in approximation algorithms,
by allowing approximations designed for the simpler metric to be applied to the more complex one~\cite{IndMat-HDCG-04}.
Eppstein and Wortman~\cite{EppWor-WADS-09} consider one such problem:
given any finite metric space $(X,d)$, find an embedding of it into a star metric space
with minimum distortion.
Here, a star is a space in which there is a central hub point $h$, not necessarily one of the given points, such that the distance between every two points is the sum of their distances to the hub.
The space of all distance-minimal non-contractive mappings from a given metric space $(X,d)$ to star metrics is called the \emph{tight span} of $(X,d)$. With the sup-norm (the maximum difference between hub distances) it is itself a metric space,
and includes an isometric copy of the original space $(X,d)$ (where each $x$ in $X$ is mapped to the star having $x$ as a hub)~\cite{Isb-CMH-64}.
The problem of Eppstein and Wortman is to select the optimal point of the tight span,
measuring the quality of each point by the distortion of the corresponding star metric.
We note that this is not the same as choosing an optimal hub point from some ambient space containing the input; for instance, if the input is the four points of a unit square in the Euclidean plane, the optimal star metric gives each of the four input points distance $1/2$ from the hub, while the best that can be achieved for a hub that itself belongs to the Euclidean plane would be distance $\sqrt{1/2}$.

Now consider a different and simpler optimization criterion: instead of minimizing distortion, suppose that we want to find a non-contractive mapping from a given metric space $(X,d)$ to a star network that minimizes the total or average distance of the points to the hub. That is, for each point $x_i$ of the input metric space, we should choose its distance $h_i$ to the hub, such that the mapping from $(X,d)$ to the resulting star metric space is non-contracting (for every $i$ and $j$, $d(i,j)\le h_i+h_j$)
and such that we minimize the sum of the chosen numbers $h_i$. Again, this can be viewed as choosing an optimal point from the tight span of $(X,d)$, but with a different optimization criterion.

\begin{figure}[t]
\centering\includegraphics[scale=0.5]{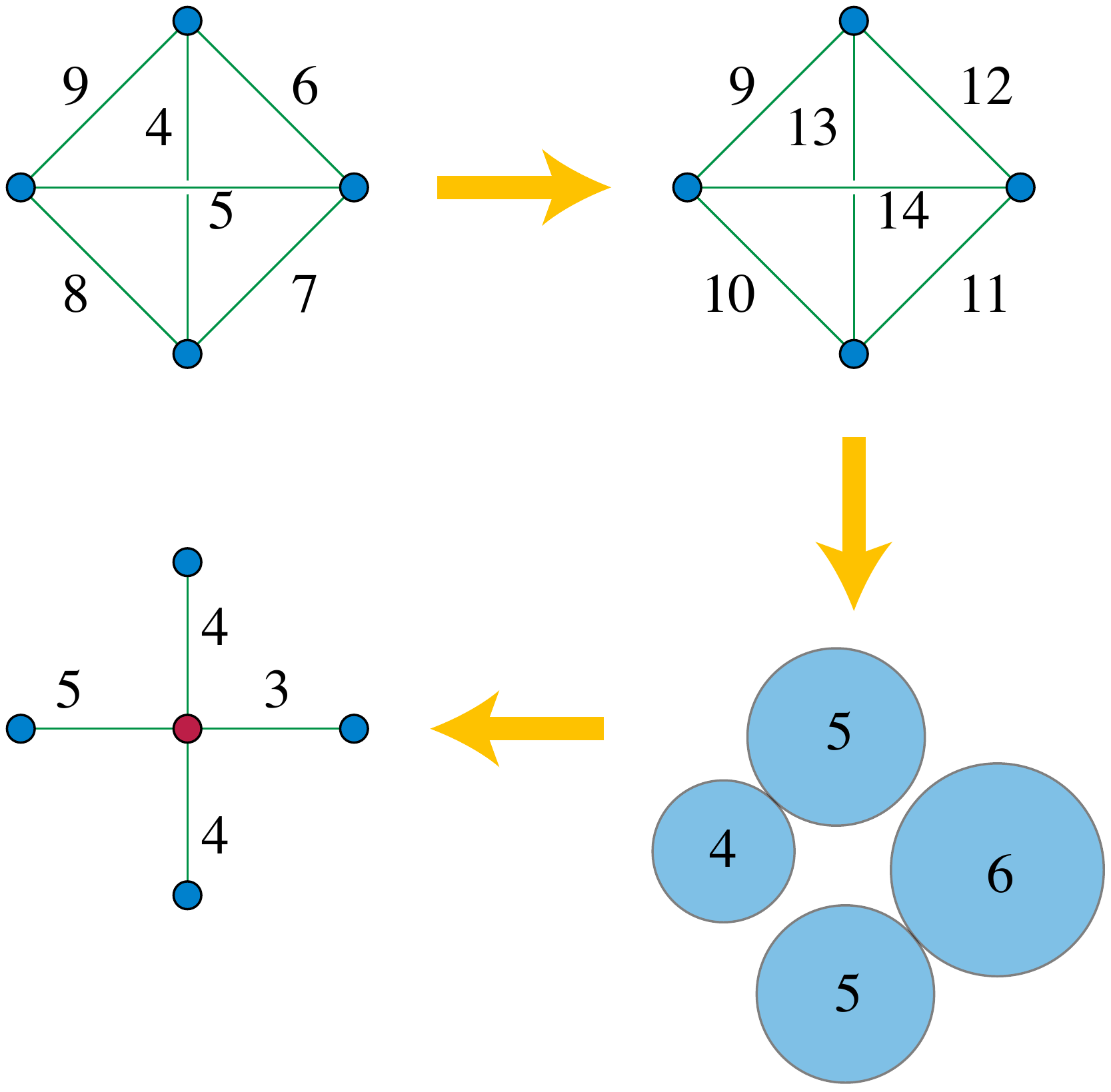}
\caption{Transforming a metric space $(X,d)$ (upper left) into the modified metric $(X,d^*)$ (upper right), finding non-overlapping balls with maximum sum of radii (lower right), and transforming the radii back into hub distances (lower left) produces an embedding of $(X,d)$ into a star metric minimizing the average distance from the hub.}
\label{fig:star-xform}
\end{figure}

As we now observe, this problem of choosing the star with the minimum average hub distance can be represented as a problem of minimizing the sum of radii of non-overlapping disks, for a different metric space $(X,d^*)$ on the same set of input points. To see this, let $D$ be the diameter of space $(X,d)$, the distance between its two farthest points. Note that, in the optimal star metric space, all hub distances will be at most $D$, for otherwise the solution would not be minimal.
We use $D$ to define the new distance $d^*(x,y)=2D-d(x,y)$. All new distances are between $D$ and $2D$, so this obeys the triangle inequality and the other requirements of a metric. Similarly, we convert hub distances $h_i$ to disk radii $r_i$ by the formula $r_i=D-h_i$ or equivalently $h_i=D-r_i$. Then two hub distances $h_i$ and $h_j$ obey the non-contractive inequality $d(i,j)\le h_i+h_j$ if and only if the corresponding two radii $r_i$ and $r_j$ obey the non-overlapping ball inequality $r_i+r_j\le d^*(x_i,x_j)$. Therefore, if we find a system of non-overlapping balls in $(X,d^*)$ minimizing the sum of radii of the balls, this will automatically also give us a system of hub distances $h_i$ defining a non-contractive mapping of $(X,d)$ to a star network that minimizes the average distance of a point to the hub. These transformations are illustrated in \autoref{fig:star-xform}.

If we use these transformations together with the algorithms that we describe in this paper, we can find the optimal star network embedding for any $n$-point metric space, minimizing the average distance to the hub, in time $O(n^3)$. This compares favorably with the $O(n^3\log^2 n)$ time of Eppstein and Wortman~\cite{EppWor-WADS-09} for finding the star network embedding that minimizes the distortion. 

\section{Equivalence to cycle cover}

Although our radius-sum optimization problem is not equivalent to geometric matching, we show in this section that it is equivalent by linear programming duality to a related problem of finding a cycle cover in an associated geometric graph. We will use this equivalence in formulating algorithms for both problems.

Let $p_i$ ($i=0,\dots n-1$) be a set of points in a metric space, with distances $d(p_i,p_j)$.
The problem of finding non-overlapping balls $(p_i,r_i)$ that maximize the sum of the radii can be expressed as a linear program:
\[
\text{maximize~} \sum r_i
\]
subject to the inequality constraints
\begin{align*}
\forall i:\quad& r_i \ge 0\\
\forall i,j:\quad& r_i+r_j \le d(p_i,p_j).
\end{align*}
By linear programming duality \cite[Ch.~12]{Vaz-AA-01} this has the same value as the dual linear program:
\[ \text{minimize~} \sum w_{ij} d(p_i,p_j) \]
subject to the inequality constraints
\begin{align*}
\forall i,j:\quad& w_{ij} \ge 0 \\
\forall i:\quad& \sum_j w_{ij} \ge 1
\end{align*}
That is, we must find non-negative weights $w_{ij}$ for the edges of a complete geometric graph such that each vertex has incident edge weights totaling at least one, and minimizing the weighted sum of edge lengths.

\begin{lemma}
\label{lem:dual-equality}
For distances obeying the triangle inequality,
there is an optimal solution to the dual linear program described above in which the incident edge weights at each vertex total exactly one.
\end{lemma}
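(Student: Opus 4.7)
The plan is to prove this by a standard exchange argument: start with any optimal dual solution and either conclude it is already vertex-tight, or modify it in a way that decreases total excess without increasing the objective, eventually reaching contradiction with some extremal choice.

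Concretely, among all optimal dual solutions, I would take one minimizing the total vertex excess $\sum_i \bigl(\sum_j w_{ij} - 1\bigr)$ (this infimum is attained since the objective-minimizing face of the feasible polytope is closed and the excess is a continuous linear functional on it). Call this solution $w$, and suppose for contradiction some vertex $v$ satisfies $\sum_j w_{vj} > 1$. I would split into cases on the support of $w$ at $v$.

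The main case, and the one where the triangle inequality does real work, is when $v$ has at least two positive-weight incident edges $(v,a)$ and $(v,b)$. Here I would perform a rerouting move: for sufficiently small $\delta > 0$, decrease both $w_{va}$ and $w_{vb}$ by $\delta$ and increase $w_{ab}$ by $\delta$. The sums at $a$ and $b$ are unchanged, the sum at $v$ drops by $2\delta$ (still $\ge 1$ for small $\delta$), and no other vertex sums are affected, so feasibility is preserved. The change in objective is $\delta\bigl(d(p_a,p_b) - d(p_v,p_a) - d(p_v,p_b)\bigr) \le 0$ by the triangle inequality, so the new solution is still optimal and has strictly smaller total excess, contradicting the choice of $w$. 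The remaining case is when $v$ has exactly one positive-weight incident edge $(v,k)$: then $w_{vk} = \sum_j w_{vj} > 1$, which forces $k$ to also have excess (its sum is at least $w_{vk}>1$), and reducing $w_{vk}$ by $w_{vk}-1$ restores tightness at $v$ without breaking feasibility at $k$ and can only decrease the objective. (The degenerate subcase with no positive incident edges contradicts feasibility directly.)

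The only real obstacle is the two-edge case — the other cases are essentially forced. What makes that case work is precisely the triangle inequality, so the proof uses the hypothesis in a clean, localized way. The rest of the write-up is just carefully checking that sums at $a$ and $b$ are preserved by the rerouting and bookkeeping the contradiction with minimality of total excess.
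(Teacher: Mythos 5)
Your proof is correct and uses the same two exchange moves as the paper: truncating any single edge weight that exceeds one, and, for a vertex with excess and two positive incident edges, shifting weight from those two edges onto the third side of the triangle, with the triangle inequality guaranteeing the objective does not increase. The one genuine (and welcome) difference is that you justify termination cleanly by an extremal choice of an optimal solution minimizing total excess and deriving a contradiction, whereas the paper simply asserts that repeated moves ``eventually'' drive all excesses to zero; your version is a bit more careful on that point, but the core argument is the same.
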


\begin{proof}
Define the excess of a vertex to be the total weight of its incident edges minus one.
If any edge has weight greater than one, its weight can be reduced to exactly one, improving the quality of the solution without changing its feasibility. Otherwise, if any vertex has positive excess, we can subtract an equal positive weight from two of its incident edges and add the same weight to the edge between the other endpoints of these edges. By the triangle inequality, this change does not worsen the solution, and again it does not change its feasibility. By making such changes we can reduce the total excess, maintaining feasibility, until eventually all vertices have excess zero.
\end{proof}

This dual program (either in the form first given above, or with the restriction that the weights at each vertex sum to exactly one according to \autoref{lem:dual-equality}) is the linear programming relaxation of the problem of finding a minimum weight perfect matching in the complete geometric graph. Such a relaxation is primarily used for bipartite graphs, for which it is exact~\cite[Ex.~12.7]{Vaz-AA-01}. For non-bipartite graphs such as the complete graph, it has a half-integral optimal solution in which all weights $w_{ij}$ belong to $\{0,1/2,1\}$~\cite[Ex.~14.8]{Vaz-AA-01}. Doubling the weights to make them integers, and interpreting each doubled weight as an edge multiplicity, gives us a combinatorial description of the dual solution as a multiset of edges in which (by \autoref{lem:dual-equality}) each vertex has degree two. That is, we have the following result:

\begin{definition}
In a weighted graph $G$, a \emph{cycle cover} is a multigraph that uses only edges of $G$ (possibly using some edges twice), and has exactly two edges incident to every vertex of $G$. The \emph{weight} of the cycle cover is the sum of weights of its edges (counted with multiplicity).
A \emph{minimum cycle cover} is a cycle cover that has the minimum possible weight among all cycle covers of~$G$.
\end{definition}

\begin{theorem}
\label{thm:dual}
The maximum sum of radii of nonoverlapping balls, centered at points $p_i$ of a metric space, equals half of the weight of a minimum cycle cover of the complete geometric graph on the points $p_i$ with edge lengths equal to the distances between the edge endpoints.
\end{theorem}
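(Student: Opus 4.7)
The plan is to put together the pieces that the excerpt has already assembled: strong LP duality, the tightening provided by \autoref{lem:dual-equality}, and the half-integrality of the matching relaxation's optimum for non-bipartite graphs.

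First I would invoke strong LP duality for the primal/dual pair displayed before the theorem. This tells us that the maximum sum of radii equals the minimum of $\sum w_{ij}\,d(p_i,p_j)$ over feasible non-negative weightings with $\sum_j w_{ij}\ge 1$ at every vertex. By \autoref{lem:dual-equality}, we may restrict attention to dual optima in which $\sum_j w_{ij}=1$ at every vertex, so the dual LP is exactly the fractional perfect matching relaxation of the complete geometric graph.

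Next I would bring in the cited half-integrality fact: for this relaxation on a (non-bipartite) complete graph, some optimum has all $w_{ij}\in\{0,\tfrac12,1\}$. Multiplying such an optimal $w$ by $2$ yields non-negative integer edge multiplicities $m_{ij}=2w_{ij}\in\{0,1,2\}$ with $\sum_j m_{ij}=2$ at every vertex. Taking each edge $\{p_i,p_j\}$ with multiplicity $m_{ij}$ produces a multigraph on the vertices $p_i$ in which every vertex has degree exactly $2$; by the definition preceding the theorem this is precisely a cycle cover, and its weight is $\sum_{i<j} m_{ij}\,d(p_i,p_j)=2\sum_{i<j} w_{ij}\,d(p_i,p_j)$, i.e.\ twice the dual optimum. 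Hence the minimum cycle cover weight is at most twice the maximum sum of radii.

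Conversely, any cycle cover $C$ gives a feasible dual solution by setting $w_{ij}=\tfrac12\,m^{C}_{ij}$, where $m^{C}_{ij}$ is the multiplicity of edge $\{p_i,p_j\}$ in $C$: every vertex has degree $2$ in $C$, so its incident weights sum to $1$, and $\sum w_{ij}\,d(p_i,p_j)$ is half the weight of $C$. Therefore the minimum cycle cover weight is at least twice the dual optimum, so the two quantities are equal, and combining with the duality equation from the first paragraph gives the theorem.

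The only non-routine step is the appeal to half-integrality of the fractional perfect matching polytope on general graphs; everything else is bookkeeping. Since the excerpt has already cited this as \cite[Ex.~14.8]{Vaz-AA-01}, the main obstacle is merely to state that appeal cleanly and to check both directions of the equivalence between half-integral dual optima and cycle covers, as outlined above.
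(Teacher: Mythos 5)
Your proposal is correct and follows essentially the same route as the paper: strong LP duality, the tightening of \autoref{lem:dual-equality} to an equality-constrained fractional perfect matching relaxation, and the cited half-integrality result, followed by scaling by $2$ to read off a degree-$2$ multigraph. You are somewhat more explicit than the paper in spelling out both directions of the correspondence between half-integral dual optima and cycle covers (and in applying \autoref{lem:dual-equality} \emph{before} invoking half-integrality, which is the cleaner logical order), but these are elaborations of the same argument rather than a different one.
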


\begin{proof}
This follows immediately from the facts that, as with any linear program, the linear program expressing the problem of maximizing the sum of radii (the maximization problem at the start of this section) has the same value as its dual minimization problem, from the known fact that the specific linear program given by this dual minimization problem has a half-integer solution, and from \autoref{lem:dual-equality} which allows us to interpret this half-integer solution as a choice of two incident edges at each vertex.
\end{proof}

For example, the example of three collinear points, given in \autoref{sec:label} and \autoref{fig:collinear} as an input that can cause one ball to have zero radius, has a single three-point cycle as its minimum cycle cover. The length of this cycle, the sum of the three distances between the points, equals twice the sum of the radii of the optimal balls (zero for the middle point, and the distance to the middle point for the two outer points).

\begin{figure}[t]
\centering\includegraphics[scale=0.5]{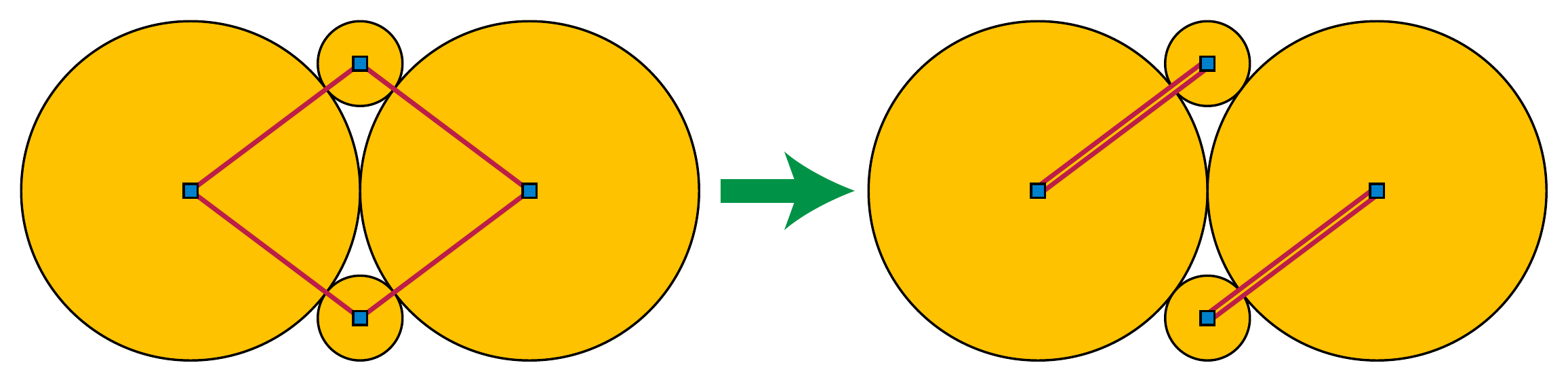}
\caption{Eliminating a long even cycle (here, a rhombus) in the minimum cycle cover by splitting it into 2-cycles with no greater length.}
\label{fig:split-even}
\end{figure}

There always exists a minimum cycle cover in which all cycles of more than two edges have odd length. To see this, observe that any long even cycle $C$ can be partitioned into two disjoint matchings, and that at least one of these matchings gives a covering of the same vertices as $C$ by $2$-cycles whose total weight is at most that of $C$ (\autoref{fig:split-even}). Therefore, from now on we will assume that our cycle covers have no long even cycles.

\autoref{thm:dual} also gives us an easy-to-test optimality condition for the maximum sum of radii problem, that applies to inputs in general position (without extra touching balls beyond the ones required by the solution):

\begin{definition}
The \emph{touching graph} of a family of nonoverlapping balls has a vertex for each ball and an edge connecting each two balls that touch.
\end{definition}

\begin{corollary}
\label{cor:opt}
Suppose that no two balls in a given family of balls overlap, and that each connected component of the touching graph of the balls is an odd cycle or an isolated edge. Then this family of balls has the maximum sum of radii of any family with the same centers.
\end{corollary}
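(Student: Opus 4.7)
The plan is to exhibit a cycle cover whose weight is exactly twice the sum of the given radii, and then apply \autoref{thm:dual} to conclude that this value is the optimum.

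First I would read the touching graph of the given family and turn it into a multigraph cover of the vertex set, using the convention (from the definition of cycle cover) that edges may be repeated. Each connected component is, by hypothesis, either an odd cycle or an isolated edge. An odd cycle is used as-is and contributes degree $2$ at each of its vertices. An isolated edge is interpreted as a $2$-cycle, traversing the single touching edge twice, which also contributes degree $2$ at each endpoint. The resulting multigraph is therefore a cycle cover in the sense of the definition above.

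Next I would compute the weight of this cover. Every edge $ij$ that appears (with multiplicity) is a touching edge, so by definition $r_i + r_j = d(p_i,p_j)$. Summing $d(p_i,p_j)$ over all edges of the cover with multiplicity equals the sum of $(r_i+r_j)$ over those same edges, and since each vertex has degree exactly $2$ in the cover, each $r_i$ is counted exactly twice. Hence the total weight of the cycle cover equals $2\sum_i r_i$, both for the odd-cycle components and for the isolated-edge (doubled) components.

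Finally I would invoke \autoref{thm:dual}: the maximum sum of radii of nonoverlapping balls at the given centers equals half of the minimum cycle cover weight. The cover constructed above is a feasible cycle cover, so the minimum weight is at most $2\sum_i r_i$, giving an upper bound of $\sum_i r_i$ on the optimum radius-sum. On the other hand, the given family is a feasible primal solution with sum-of-radii equal to $\sum_i r_i$, so it achieves this bound and is therefore optimal. The only step with any subtlety is handling the isolated-edge case; treating it as a $2$-cycle (which the definition of cycle cover explicitly permits) is exactly what makes the degree-$2$ condition and the weight computation go through uniformly with the odd-cycle case.
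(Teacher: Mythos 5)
Your proof is correct and follows essentially the same route as the paper: interpret the touching graph (with isolated edges as $2$-cycles) as a cycle cover, observe that every edge of this cover is a touching edge so its weight is $2\sum_i r_i$, and invoke \autoref{thm:dual} to conclude optimality. You spell out the weight computation more carefully than the paper's terse one-sentence argument (which incidentally says the cover has length ``half'' rather than ``twice'' the sum of radii, an apparent slip), but the underlying idea is identical.
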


\begin{proof}
The touching graph of the balls (viewing each isolated edge as a $2$-cycle) gives a cycle cover of length half the sum of radii of the given balls. By \autoref{thm:dual} no cycle cover can be shorter and no system of balls with the same centers can have a larger sum of radii.
\end{proof}

In linear programming terms, this corollary is an instance of \emph{complementary slackness}, a more general condition for testing whether a simultaneous primal and dual solution are both optimal.
\autoref{fig:OptimalDisks} shows a family of disks meeting the conditions of the corollary, together with their touching graph.

\begin{figure}[b]
\centering\includegraphics[width=0.5\textwidth]{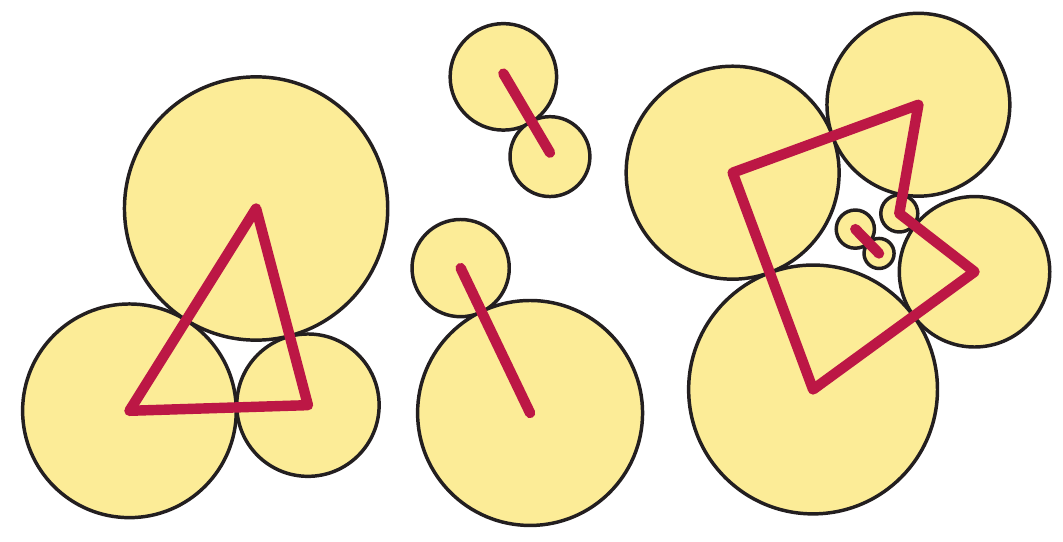}
\caption{Nonoverlapping disks that (by \autoref{cor:opt}) maximize the sum of radii for their centers.}
\label{fig:OptimalDisks}
\end{figure}

\section{Cycle covers from matchings}

\autoref{thm:dual}, giving an equivalence between maximizing the sum of radii and minimizing the total length of a cycle cover, does not yet give us a combinatorial algorithm for either of these two equivalent problems. And because it only gives us an equivalence of the optimum values of the two problems, it does not help us compute the actual radii in an optimum collection of disks.
Therefore, we still need to show how to solve both problems combinatorially, in a way that provides for us the solution itself and not just the solution value.

We begin with the cycle cover. As we show below, it can be transformed into a more familiar matching problem using the notion of a bipartite double cover.

\begin{definition}
The \emph{bipartite double cover} $2G$ of a graph $G$ is the tensor product $G\times K_2$, a bipartite graph with two copies of each vertex of $G$ (one of each color) and two copies of each edge of $G$ (one for each pair of oppositely-colored copies of the endpoint of the edge). If $G$ is weighted, we use the same edge weights in $G$ and $2G$.
\end{definition}

\begin{lemma}
\label{lem:cover2match}
Every perfect matching in $2G$ corresponds (under the mapping that takes each vertex of $2G$ to the corresponding vertex in $G$) to a cycle vertex cover with equal total length in $G$. Every cycle vertex cover in $G$ comes from a perfect matching in $2G$ in this way.
\end{lemma}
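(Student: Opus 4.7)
The plan is to make explicit the standard permutation interpretation of a perfect matching in the bipartite double cover, and show that this permutation, read back on $G$, is exactly a cycle vertex cover of the same weight. Write the vertex set of $2G$ as $\{v^+:v\in V(G)\}\cup\{v^-:v\in V(G)\}$ and, for each edge $uv\in E(G)$ of weight $\ell(uv)$, the two edges of $2G$ as $u^+v^-$ and $u^-v^+$, both of weight $\ell(uv)$. The projection $\pi$ sends $v^\pm\mapsto v$ and each matching edge $u^\varepsilon v^{-\varepsilon}$ to the underlying edge $uv\in E(G)$ (counted with multiplicity).

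For the forward direction, let $M$ be a perfect matching in $2G$. Each vertex $v\in V(G)$ has two preimages $v^+,v^-$ under $\pi$, each incident to exactly one edge of $M$, so $v$ is incident to exactly two edges of $\pi(M)$ (counted with multiplicity). Thus $\pi(M)$ is a $2$-regular multigraph on $V(G)$, i.e., a cycle vertex cover. The weight is preserved edge by edge, since each matching edge of $2G$ has the same weight as its $\pi$-image. For the backward direction, take a cycle vertex cover $C$ of $G$; I would handle its connected components one at a time. For a long cycle $v_1v_2\cdots v_k v_1$, choose an orientation and place the edges $v_1^+v_2^-,\;v_2^+v_3^-,\;\ldots,\;v_k^+v_1^-$ into the matching; this covers the $2k$ vertices $v_1^\pm,\ldots,v_k^\pm$ exactly once each and projects back to the $k$ edges of the cycle. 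For a $2$-cycle on $\{u,v\}$, which uses the edge $uv$ twice in $C$, place both lifts $u^+v^-$ and $u^-v^+$ in the matching; this covers $u^\pm,v^\pm$ and projects back to $uv$ with multiplicity two. Doing this over all components yields a perfect matching of $2G$ whose projection is $C$ and whose weight equals that of $C$.

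To conclude, I would remark that these two constructions establish both claims of the lemma: the projection of any perfect matching is a cycle vertex cover of equal total length, and every cycle vertex cover arises this way (concretely, from the lifts described above). A convenient way to see the structure uniformly is to identify a perfect matching $M$ in $2G$ with the permutation $\sigma\colon V(G)\to V(G)$ defined by $\sigma(v)=u$ iff $v^+u^-\in M$; absence of edges $v^+v^-$ in $2G$ forces $\sigma$ to have no fixed points, and the cycle decomposition of $\sigma$ is exactly the cycle decomposition of $\pi(M)$, with transpositions corresponding to doubled edges.

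I do not anticipate a serious obstacle; the only point that needs care is the bookkeeping around $2$-cycles, to make sure they are consistently interpreted as doubled edges on both sides (two matching edges above a single edge of $G$, versus one edge of $G$ used with multiplicity two in the cover) so that the weight equality $\ell(u^+v^-)+\ell(u^-v^+)=2\ell(uv)$ matches the weight contribution of a $2$-cycle in $C$.
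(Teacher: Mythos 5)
Your proof is correct and takes essentially the same route as the paper: project matching edges back to $G$ for the forward direction, and orient each cycle of the cover to lift it to a matching in $2G$ for the backward direction. The extra explicitness about $2$-cycles as doubled edges, and the reformulation via a fixed-point-free permutation $\sigma$, are fine elaborations but not substantively different from the paper's argument.
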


\begin{proof}
To convert a matching in $2G$ to a cycle cover in $G$, we replace every edge of the matching by the corresponding edge in $G$. Each vertex in $G$ belongs to two such edges (one for the matched edge of each of its two copies in $2G$) so the resulting multiset of edges has degree two at every vertex of $G$. I.e., it is a cycle cover.

To convert a cycle cover in $G$ to a matching in $2G$ we orient each cycle of the cycle cover consistently (choosing one of the two possible orientations of each cycle, arbitrarily). Then,
in $2G$, we match each vertex in one copy of $G$ to its outgoing neighbor in the other copy of $G$.

These two transformations clearly preserve the solution value, proving the lemma.
\end{proof}

See \autoref{fig:DoubledGraph} for an example.
Each cycle of more than two vertices in $G$ has two different representations as a set of matched edges in $2G$, but this ambiguity is not a problem.
We can find a minimum cycle vertex cover in~$G$ by finding a minimum weight perfect matching in~$2G$.

A minimum weight perfect matching on a bipartite graph of $n$ vertices and $m$ edges can be found in time $O(mn+n^2\log n)$~\cite{FreTar-JACM-87}. (Better times are known when the weights are small integers~\cite{DuaSu-SODA-12}.) By solving this problem on the doubled complete geometric graph, the optimal sum of radii can be computed in time $O(n^3)$. However, this is still slower than we would like for Euclidean spaces, and does not yet tell us the radii of the individual balls.

\begin{figure}[b]
\centering\includegraphics[width=0.5\textwidth]{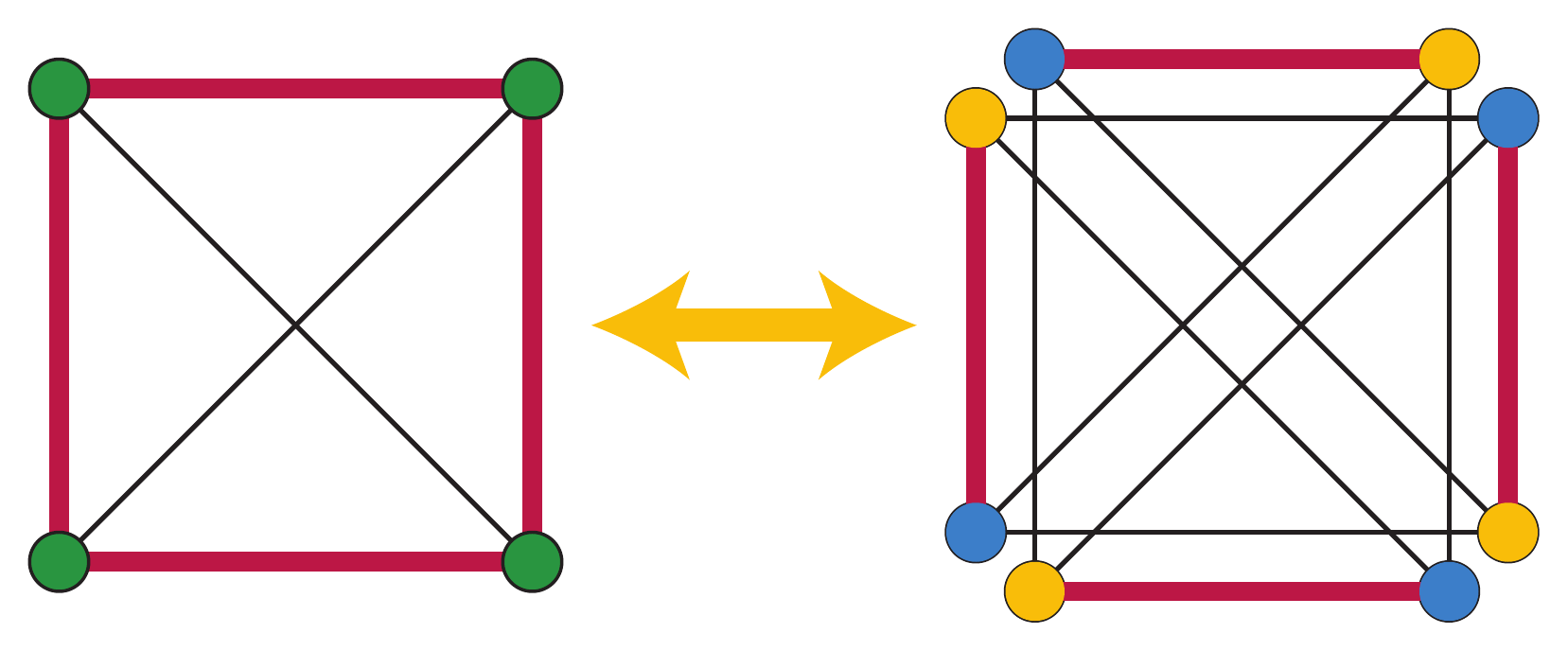}
\caption{The correspondence between a cycle cover of a graph $G$ (left, in this case, a complete graph $K_4$) and a matching in $2G$ (right).}
\label{fig:DoubledGraph}
\end{figure}

\section{From cycle covers to balls}

Each odd cycle in a minimum cycle cover of the complete geometric graph corresponds to a unique system of balls that maximizes the sum of radii for those points. Let the cycle have vertices $p_0, p_1, \dots p_{k-1}$ and edge lengths $\ell_i = d(p_i,p_{i+1\bmod k})$. Then for any $j$ with $0\le j<i$  set the radius $r_j$ of point $p_j$ to be
\[ r_j=\sum_i \frac{\pm \ell_i}{2}, \]
where we choose the signs in this sum so that the two edges adjacent to $p_j$ have positive sign and every other point $p_i$ is incident to two edges with opposite signs from each other, as depicted in \autoref{fig:radius-assgt}.

\begin{figure}[hbt]
\centering\includegraphics[scale=0.4]{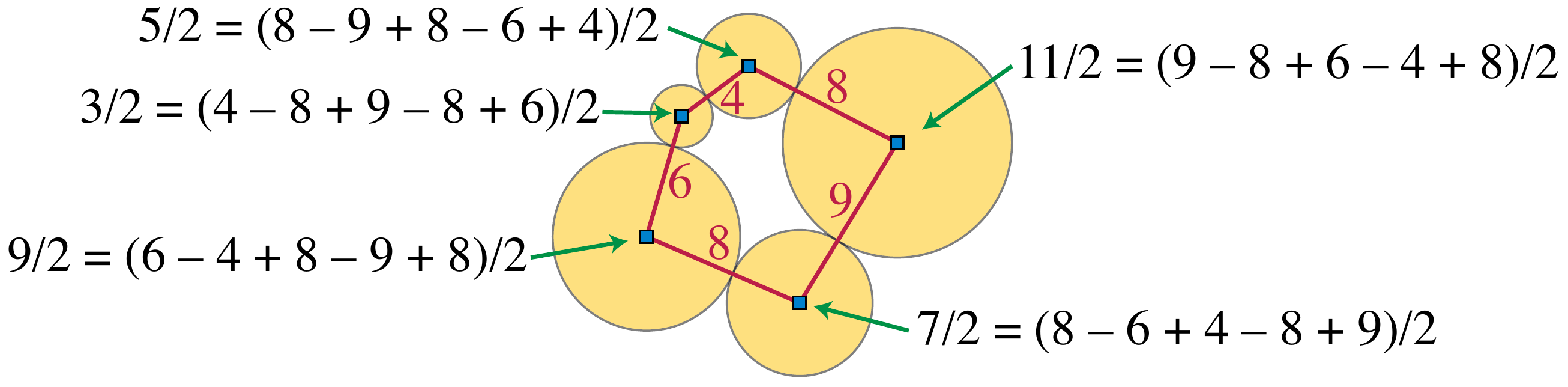}
\caption{Converting the edge lengths of an odd cycle to radii of non-overlapping disks.}
\label{fig:radius-assgt}
\end{figure}

\begin{lemma}
\label{lem:odd-rad}
The radius assignment given above gives the unique set of balls with maximum sum of radii centered at the vertices of an odd cycle in a minimum cycle cover.
\end{lemma}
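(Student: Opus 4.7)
The plan is to combine elementary linear algebra on the odd cycle with LP duality from \autoref{thm:dual} and complementary slackness.

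First I would verify that the formula is well-defined by checking that it gives the unique solution of the linear system
\[
r_i + r_{(i+1)\bmod k} = \ell_i, \qquad i = 0,\dots,k-1.
\]
Solving sequentially from $r_0$ one obtains $r_i = (-1)^i r_0 + (\text{alternating sum of }\ell_0,\dots,\ell_{i-1})$. Substituting into the wrap-around equation $r_{k-1}+r_0 = \ell_{k-1}$ and using that $k$ is odd (so that the coefficient of $r_0$ collapses to $2$ rather than $0$) determines $r_0$, and hence every $r_j$, to be exactly the value displayed in \autoref{fig:radius-assgt}. Summing the $k$ tight equations and halving yields $\sum_j r_j = \tfrac12\sum_i \ell_i$.

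Next I would observe that the cycle $C$, in isolation, is a minimum cycle cover of the complete geometric graph restricted to its $k$ vertices: otherwise one could substitute a cheaper cover on these $k$ vertices into the global minimum cycle cover and lower its total weight, contradicting minimality. Applying \autoref{thm:dual} to the $k$-point sub-metric $\{p_0,\dots,p_{k-1}\}$ shows that the maximum possible sum of radii over non-overlapping balls at these centers is exactly $\tfrac12\sum_i \ell_i$, matching the value achieved by the formula.

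Feasibility (in particular non-negativity of the radii) and uniqueness would follow from complementary slackness. The primal radius LP is feasible (take $r_i = 0$) and bounded, so it attains its maximum; let $\tilde r$ be any maximizer. Because the dual optimum places weight $\tfrac12$ on every edge of $C$, complementary slackness forces $\tilde r_i + \tilde r_{(i+1)\bmod k} = \ell_i$ on every cycle edge. The odd-cycle uniqueness from the first step then pins $\tilde r$ on the vertices of $C$ down to exactly the formula values, which are therefore simultaneously feasible (so $r_j \ge 0$ and $r_i + r_j \le d(p_i,p_j)$ for every pair of cycle vertices) and the unique primal optimizer.

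The main obstacle I anticipate is non-negativity of the $r_j$: the algebraic formula by itself places no sign control on the alternating sum, and only the global minimality of the enclosing cycle cover — invoked through LP duality and complementary slackness in the last step — forces the sum to come out non-negative. Everything else is a routine verification.
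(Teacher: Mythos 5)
Your proposal is correct and follows essentially the same strategy as the paper's proof: verify the formula solves the tight system on cycle edges, note that feasibility will follow from uniqueness plus existence of an optimizer, and establish uniqueness from the odd-cycle parity. The difference is primarily one of vocabulary and explicitness. The paper uses two direct elementary arguments in place of your LP machinery: a ``gap'' argument (any solution with a slack cycle edge has doubled sum of radii falling short of the cycle length by that slack) where you invoke complementary slackness with the half-weight dual on $C$, and a parity argument phrased as a chain of alternating increases and decreases where you solve the recurrence $r_{i+1}=\ell_i - r_i$ and note the coefficient of $r_0$ survives because $k$ is odd. These are the same observations. One thing your writeup makes explicit that the paper leaves implicit is that the restriction of the minimum cycle cover to the $k$ cycle vertices is itself a minimum cycle cover of the $k$-point sub-metric (by an exchange argument), which is what licenses applying \autoref{thm:dual} to conclude that $\tfrac12\sum\ell_i$ is the exact optimum value on those vertices, and hence that an optimizer achieving that value exists before one can invoke uniqueness. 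That is a small but genuine bit of extra care, and it is the step on which the whole ``feasibility by uniqueness'' move quietly depends.
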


\begin{proof}
For every edge $e$ of the cycle, the two radii of the disks centered at the endpoints of $e$ have expressions in which the terms for $e$ both have positive sign. However, these same two expressions differ in their choice of the signs for every other edge.
Therefore, when we sum these two radii, every term except the ones for the lengths of $e$ cancels, and the sum of the two radii is exactly $e$ Thus, the two balls centered at the endpoints of $e$ touch but do not overlap, and the sum of all the radii is half the length of the cycle, as desired.
We have not yet shown that balls for non-consecutive cycle vertices are non-overlapping, but that will follow from uniqueness: there is no other optimal solution, so this must be a valid solution.

If any other system of non-overlapping balls wth the same centers includes a consecutive pair that do not touch, then its sum of radii cannot be as large. For, in any system of nonoverlapping balls, the length of each cycle edge equals the sum of radii of nonoverlapping balls at its endpoints, plus a non-negative ``gap'' equal to the difference between the length and the sum of radii. So if any gap is positive, the doubled sum of radii will fall short of the cycle length by that gap.

No other system of non-overlapping balls with the same centers can have each consecutive pair touching. For, increasing the radius of any one ball (relative to the solution given above) would cause a chain of alternating decreases and increases of the radii of all the other balls, of equal magnitude, leading to an inconsistency because the cycle length is odd and the changes in radius cannot strictly alternate in sign.
\end{proof}

The $2$-cycles in the cycle cover cause us more trouble, because their radii may be constrained by other nearby points but are not in general uniquely determined. For instance, in \autoref{fig:TwoCycleConstraints}, each of the three pairs of touching circles must have nearly-equal radii to avoid overlaps with nearby circles.

\begin{figure}[t]
\centering
\includegraphics[width=0.5\textwidth]{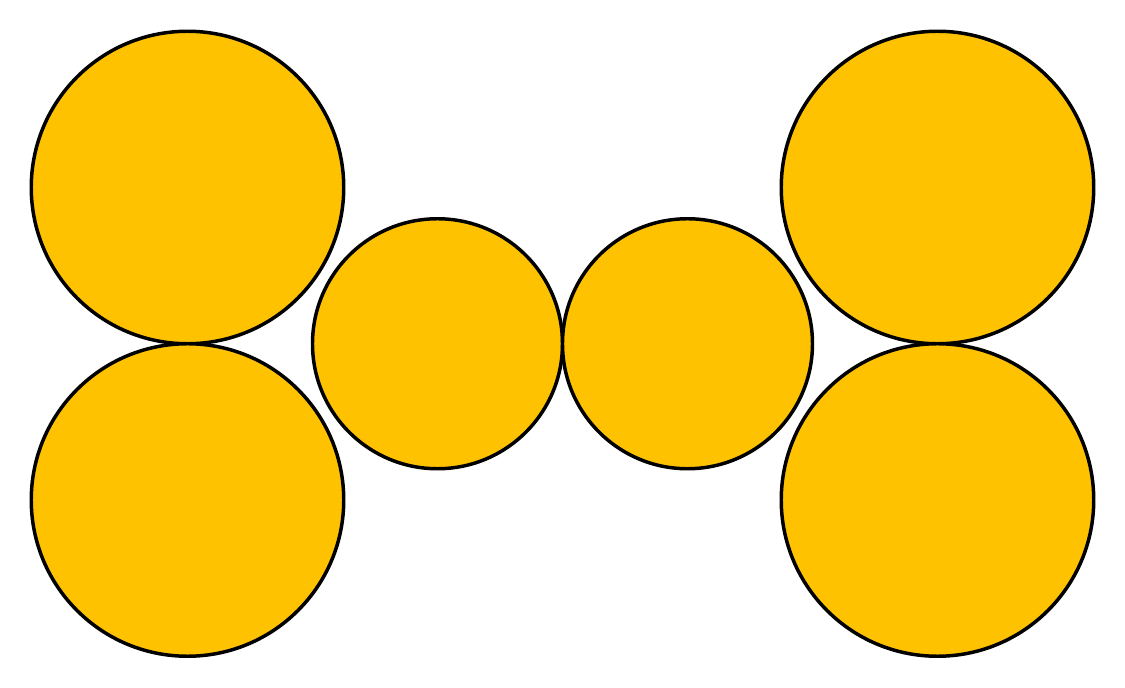}
\caption{Pairs of disks corresponding to $2$-cycles in the minimum cycle cover may constrain the radii of other nearby pairs of disks. In this example, each of the three pairs of touching circles must have nearly-equal radii to avoid overlaps with nearby circles.}
\label{fig:TwoCycleConstraints}
\end{figure}

To describe how we transform a minimum weight perfect matching problem on $2K_n$ into a feasible system of balls with maximum sum of radii, we need to look more deeply into the details of the Hungarian algorithm for minimum weight perfect matching.
The version of this algorithm described by Tarjan~\cite[Secs.~8.4 \&~9.1]{Tar-DSNA-83} maintains a matching (initially empty) which it uses to orient the graph. Unmatched edges are directed from one side of the bipartition to the other, and matched edges are directed in the opposite direction. In each iteration the algorithm finds a minimum-cost alternating path between two unmatched vertices; here, the cost of a path is the sum of lengths of unmatched edges, minus the sum of lengths of matched edges. It uses this path to extend the matching by one edge.

Because matched edges are subtracted from the path length, each path search involves negative-weight edges. However, the algorithm also maintains a system of non-negative weights that are equivalent (in the sense of having the same shortest paths), allowing Dijkstra's algorithm to be used, and adjusts these weights to keep them non-negative after each iteration. Each iteration increases the number of matched edges, so there are $O(n)$ iterations. The time per iteration can be bounded by the time for Dijkstra's algorithm. Using Fibonacci heaps this gives a total running time of $O(mn + n^2\log n)$~\cite{FreTar-JACM-87}.

To adjust edge weights  we maintain dual variables: a real number for each vertex of the bipartite graph. We subtract these variables from the length of each incident unmatched edge, and add these variables to the (negated) length of each incident matched edge. In order for the adjusted edge weights to be non-negative, the algorithm maintains an invariant that the length of each unmatched edge is at least the sum of the dual variables at its endpoints, and each matched edge length equals the sum of the dual variables at its endpoints.

Recall that the graph $2K_n$ to which we apply this algorithm has two vertices for each input point $p_i$, one of each color (say red and blue). We may visualize the dual variables at these two vertices as two balls (again, one red and one blue) both centered at point $p_i$. Then the invariants maintained by the Hungarian algorithm can be expressed in our terms as stating that pairs of balls of opposite colors cannot overlap unless they have the same center, and that each matched edge comes from a touching pair of oppositely colored balls (\autoref{fig:HungarianDuals}). However, this visualization may be somewhat misleading, as we allow the dual variables to be negative.

\begin{figure}[ht]
\centering
\includegraphics[width=0.5\textwidth]{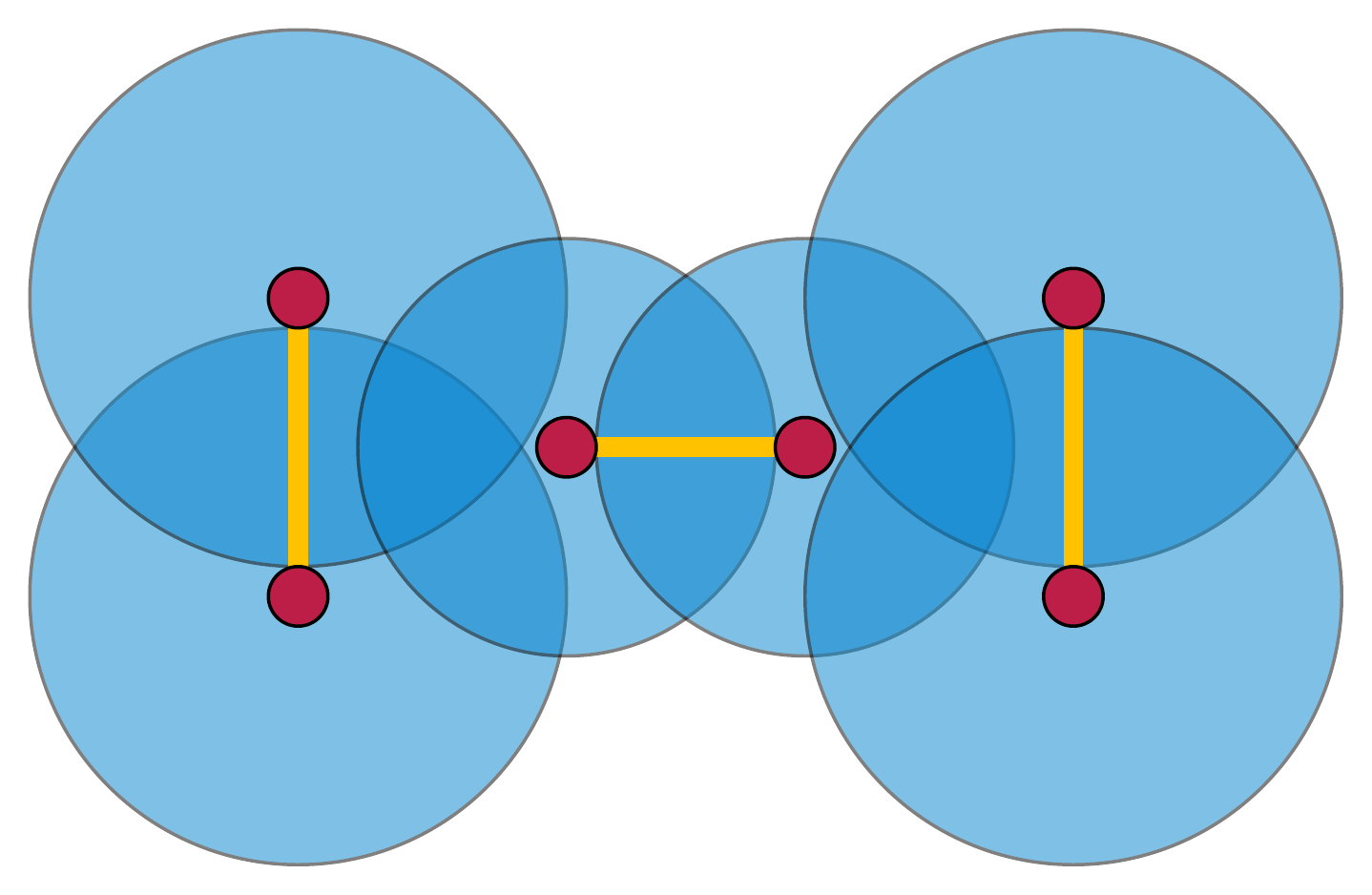}
\caption{The dual variables for the Hungarian matching algorithm on the bipartite graph $2K_n$ can be visualized as red and blue balls with no bichromatic overlaps, touching at each matched pair.}
\label{fig:HungarianDuals}
\end{figure}

\begin{lemma}
\label{lem:avg-radius}
If we are given the dual variables found by the Hungarian algorithm for minimum weight perfect matching in the bipartite graph $2K_n$, we can construct from them a system of nonoverlapping balls with maximum sum of radii, in linear time.
\end{lemma}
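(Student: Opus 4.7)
The plan is to define $r_i = \tfrac12(u_i^R + u_i^B)$ at each center $p_i$ directly from the Hungarian duals and then verify that this (after a small local adjustment at the $2$-cycles) is a feasible optimal primal.

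Pairwise non-overlap is immediate. The Hungarian invariants give $u_i^R + u_j^B \le d(p_i,p_j)$ and, by symmetry of the distance, $u_j^R + u_i^B \le d(p_i,p_j)$; adding and halving yields $r_i + r_j \le d(p_i,p_j)$ for every pair. Optimality of the sum drops out for free: by LP strong duality on $2K_n$, $\sum_v u_v$ equals the minimum matching weight, which by \autoref{thm:dual} together with \autoref{lem:cover2match} equals twice the maximum sum of radii. Hence $\sum_i r_i$ already attains the optimum, regardless of any later adjustment.

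The real work is non-negativity, since the excerpt explicitly allows individual duals to be negative. I would read the cycle cover off the tight matched edges and treat each cycle separately. On an odd cycle, chasing the matched equalities $u_i^R + u_{i+1}^B = \ell_i$ together with the non-matched dual feasibilities around the loop (whose odd parity is what makes the chase close up consistently) pins each $u_i^R + u_i^B$ down to the unique value predicted by the alternating-sum formula of \autoref{lem:odd-rad}, which is non-negative because that lemma already exhibits it as the unique optimum of a feasible subproblem. On a $2$-cycle $(i,j)$ the matched equalities only force $r_i + r_j = d(p_i,p_j)$; the pair $(r_i,r_j)$ still has one degree of freedom, coming from simultaneously shifting $u_i^R$ up and $u_j^B$ down (and $u_j^R$ up, $u_i^B$ down), which moves $r_i$ and $r_j$ in opposite directions while preserving every other $r_k$ and every dual inequality. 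I would use this freedom to clamp both $r_i$ and $r_j$ into $[0,d(p_i,p_j)]$.

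The trickiest sub-step, and the one I expect to be the main technical obstacle, is checking that the $2$-cycle clamp preserves the non-overlap constraints against centers in other cycles. I would argue this from the dual inequality $u_i^R + u_k^B \le d(p_i,p_k)$: the slack available there is precisely what would be consumed before $r_i + r_k$ could exceed $d(p_i,p_k)$, so clamping $r_i$ to $[0,d(p_i,p_j)]$ cannot push any pairwise constraint out of feasibility. Once this is in place, the whole construction simply traverses the matching and does $O(|C|)$ work per cycle $C$, giving linear total time.
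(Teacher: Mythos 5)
Your core construction matches the paper's exactly: set $r_i = \tfrac12(a_i+b_i)$, verify non-overlap by averaging the two dual feasibility inequalities $a_i+b_j\le d(p_i,p_j)$ and $a_j+b_i\le d(p_i,p_j)$, and verify optimality of $\sum r_i$ by observing that the matched edges are tight so the red and blue radii around each cycle sum to the cycle length. Where you diverge is that you devote most of the argument to the constraint $r_i\ge 0$. The paper's proof does not discuss this at all: it defines $r_i$, checks the pairwise inequality, checks the sum, and stops. You are right that it is a nontrivial point, since the paper itself notes that individual Hungarian duals may be negative.

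Your treatment of non-negativity, however, is the weakest part of the proposal. For odd cycles, the claim that the tight matched equalities "pin each $u_i^R+u_i^B$ down to the unique value predicted by \autoref{lem:odd-rad}" is circular: the uniqueness in \autoref{lem:odd-rad} is uniqueness among \emph{feasible} primals, so invoking it already presupposes $r_i\ge 0$. A cleaner direct argument exists: if $i$ lies on a cycle of length at least $3$ with distinct cycle neighbors $j$ and $k$, the matched edges give $a_i+b_j=d(p_i,p_j)$ and $a_k+b_i=d(p_k,p_i)$, while feasibility of the (existing, non-matched) edge between $k^R$ and $j^B$ plus the triangle inequality give $a_k+b_j\le d(p_k,p_j)\le d(p_k,p_i)+d(p_i,p_j)$; subtracting yields $a_i+b_i\ge 0$. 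This fails precisely when $j=k$, i.e.\ at a $2$-cycle, which is where you introduce your clamp. But the clamp is a genuine modification of the output: the resulting radii are no longer the averages of the Hungarian duals, so this is not the same construction as the paper's. Whether the unmodified averaged radii at a $2$-cycle are automatically non-negative — say as an invariant of the Hungarian algorithm's particular sequence of dual updates from the all-zero start — is the real unresolved question, and neither your proposal nor the paper's proof settles it; your clamp is a workaround rather than an answer, and its correctness argument (that "the slack available there is precisely what would be consumed") needs to be spelled out via the triangle inequality $d(p_j,p_k)-d(p_i,p_j)\le d(p_i,p_k)$ to actually go through.
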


\begin{proof}
Let the two dual variables for point $p_i$ be $a_i$ and $b_i$, and calculate the radius of the ball centered at $p_i$ to be the average $r_i=(a_i+b_i)/2$ of the two dual variables.
For each $i$ and $j$, we have (by assumption) $a_i+b_j\le d(p_i,p_j)$ and $b_i+a_j\le d(p_i,p_j)$, and averaging these two inequalities gives us that $r_i+r_j\le d(p_i,p_j)$. That is, the balls with radius $r_i$ are non-overlapping.

Each edge of the cycle cover of $K_n$ has two balls of opposite color at its endpoints whose radii sum to the edge length. For each cycle of the cycle cover of $K_n$, each ball centered at a cycle vertex will contribute to this equality for exactly one of the two incident cycle edges, so the sum of the radii of the blue and red balls centered at the cycle vertices equals the length of the cycle. Therefore, when we average these red and blue radii to give the radii $r_i$ of a single system of balls, the sum of the averaged radii equals half the length of the cycle. Thus, over the whole input,
the sum of all the radii $r_i$ equals half the length of the cycle cover. The optimality of this system of nonoverlapping balls then follows from \autoref{thm:dual}.
\end{proof}

Combining \autoref{lem:cover2match}, \autoref{lem:avg-radius}, and the known algorithms for minimum weight perfect matching in bipartite graphs gives us the following result.

\begin{theorem}
Given $n$ points in an arbitrary metric space, we can construct a system of nonoverlapping balls centered at those points, maximizing the sum of radii of the balls, in time $O(n^3)$.
\end{theorem}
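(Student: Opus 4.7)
The plan is to combine the three ingredients already assembled in this section. First, I would compute the $\binom{n}{2}$ pairwise distances $d(p_i,p_j)$ by querying the metric, giving the edge weights of the complete geometric graph $K_n$ in $O(n^2)$ time. Next, I would form the bipartite double cover $2K_n$ as in the definition preceding \autoref{lem:cover2match}; this is a bipartite graph on $2n$ vertices with $m = 2\binom{n}{2} = \Theta(n^2)$ weighted edges, and it can be built implicitly without increasing the running time.

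Then I would run the Hungarian algorithm in the Fredman--Tarjan form cited earlier~\cite{FreTar-JACM-87} to compute a minimum-weight perfect matching in $2K_n$, together with the final values of its dual variables. The running time is $O(mn' + (n')^2 \log n')$, where $n' = 2n$ is the number of vertices and $m = \Theta(n^2)$ is the number of edges of the bipartite graph; substituting gives $O(n \cdot n^2 + n^2 \log n) = O(n^3)$.

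By \autoref{lem:cover2match}, the resulting perfect matching projects to a minimum cycle cover of $K_n$ with the same total weight, which by \autoref{thm:dual} has weight equal to twice the maximum sum of radii. More importantly for producing the balls themselves, the Hungarian algorithm explicitly maintains the dual variables $a_i,b_i$ described in the discussion preceding \autoref{lem:avg-radius}, so no extra work is required to obtain them. Feeding these $2n$ dual values into \autoref{lem:avg-radius} yields, in $O(n)$ additional time, a system of nonoverlapping balls centered at the $p_i$ whose sum of radii achieves the optimum.

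The only thing that might look like an obstacle is ensuring that the matching routine actually exposes the final dual variables rather than merely the primal matching; but since the Hungarian algorithm maintains these dual variables throughout as an intrinsic part of its reweighting step (as described in the paragraphs preceding \autoref{lem:avg-radius}), extracting them at termination is free. Summing the three stages gives the claimed $O(n^3)$ total running time.
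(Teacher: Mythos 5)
Your proof follows exactly the same route as the paper's: construct $2K_n$, run the Hungarian algorithm (with the $O(mn+n^2\log n)$ bound of Fredman--Tarjan) to obtain both the matching and its dual variables, and feed the duals into \autoref{lem:avg-radius} to recover the radii. The only difference is that you spell out the $m=\Theta(n^2)$, $n'=2n$ bookkeeping that the paper leaves implicit; the argument is correct.
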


\begin{proof}
We construct the bipartite graph $2K_n$, weighted by the distances between points, and find a minimum-weight perfect matching in $2K_n$ (or equivalently by \autoref{lem:cover2match} a cycle vertex cover in $K_n$), by using an algorithm for minimum weight perfect matching that also provides the values of the dual variables (as the Hungarian algorithm does).
We then apply \autoref{lem:avg-radius} to convert these dual variable values to an optimal system of radii.
\end{proof}

\section{Euclidean speedup}

To speed up the search for balls with maximum sum of radii in low-dimensional Euclidean spaces, we replace the complete geometric graph of the previous sections with a sparser graph that contains the optimal cover and behaves the same with respect to testing whether systems of balls are nonoverlapping.

\subsection{The nearest-neighbor overlap graph}

\begin{definition}
Given a system of points $p_i$ in a metric space, let the \emph{nearest neighbor distance} $\delta_i$ of each point $p_i$ be
\[ \delta_i = \min_{j\ne i} d(p_i,p_j). \]
\end{definition}

In Euclidean spaces of bounded dimension, these distances can be found for all points in total time $O(n\log n)$~\cite{Cla-FOCS-83,Vai-DCG-89}.

\begin{definition}
Define the \emph{nearest-neighbor overlap graph} $\NNG$ (\autoref{fig:NNG}) to be the graph whose vertices are the points $p_i$, with an edge between two points $p_i$ and $p_j$ when the balls with radius equal to the nearest neighbor distance overlap or touch. That is, $p_i$ and $p_j$ are adjacent exactly when
\[ d(p_i,p_j) \le \delta_i + \delta_j. \]
\end{definition}

\begin{figure}[ht]
\centering\includegraphics[width=0.5\textwidth]{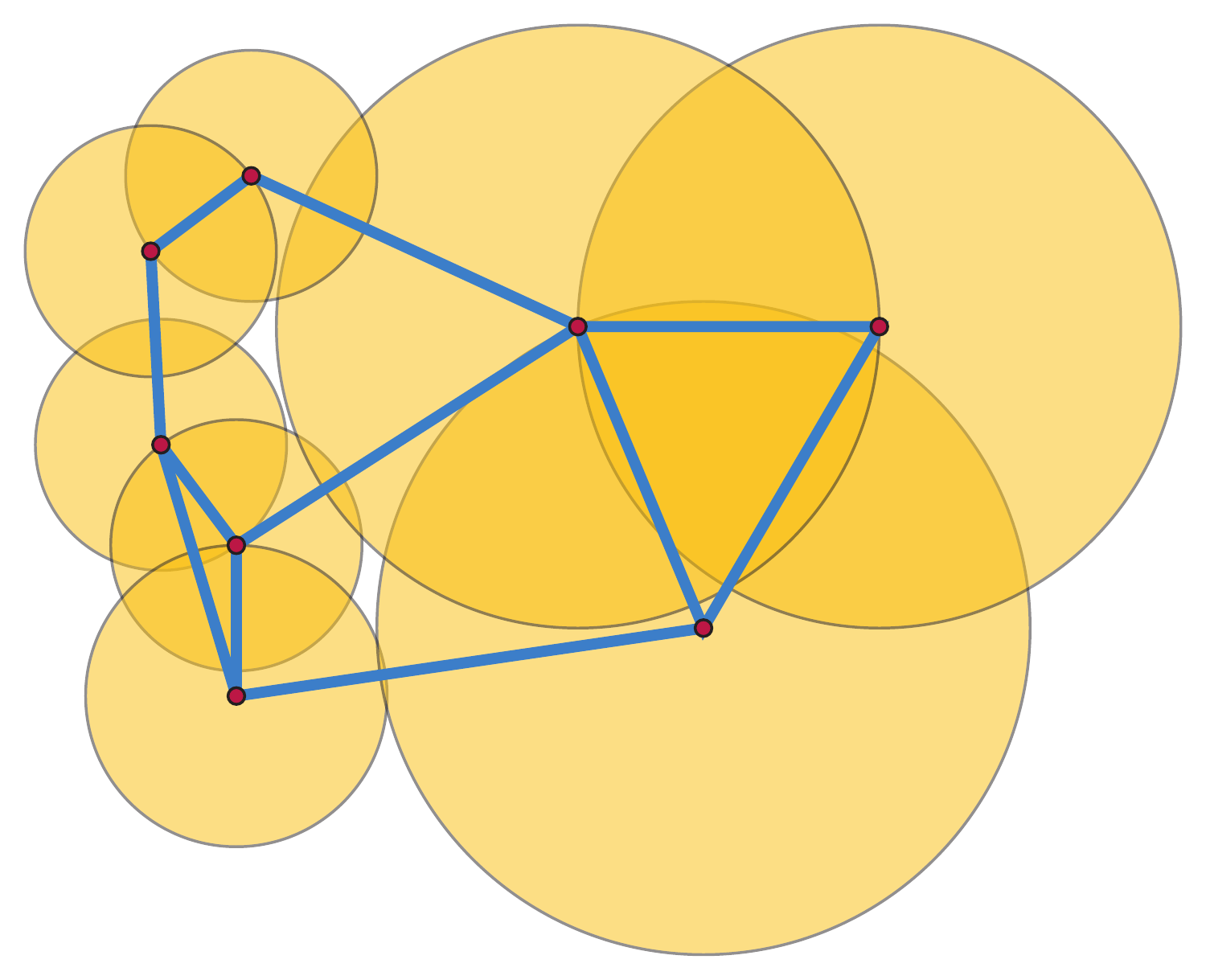}
\caption{The nearest neighbor balls of a set of points and their nearest neighbor overlap graph $\NNG$.}
\label{fig:NNG}
\end{figure}

\begin{lemma}
\label{lem:cover-subgraph}
Each edge of the minimum cycle cover of the complete geometric graph belongs to $\NNG$.
\end{lemma}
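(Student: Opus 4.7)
The plan is to combine the duality already established with the observation that in any optimal radius system, each ball is bounded by its nearest-neighbor distance. Specifically, take any minimum cycle cover and let $r_i$ denote the radii of the corresponding optimal non-overlapping ball system (constructed from odd cycles via \autoref{lem:odd-rad} and from 2-cycles via \autoref{lem:avg-radius} applied to the Hungarian duals). I would first show that every edge $(p_i,p_j)$ appearing in the cover is ``tight,'' meaning $r_i+r_j=d(p_i,p_j)$, and then bound each radius by the nearest-neighbor distance $\delta_i$ to obtain $d(p_i,p_j)=r_i+r_j\le\delta_i+\delta_j$.

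The tightness step is the one to handle with a little care, since the paper lists it per-case (odd cycles in \autoref{lem:odd-rad}, 2-cycles through the Hungarian duals in \autoref{lem:avg-radius}). A uniform argument is to use \autoref{thm:dual}: the minimum cycle cover has total weight equal to $2\sum_i r_i$, and on the other hand
\[
\sum_{(i,j)\in C} d(p_i,p_j) \;\ge\; \sum_{(i,j)\in C} (r_i+r_j) \;=\; 2\sum_i r_i,
\]
where the sums are over edges of the cover $C$ counted with multiplicity and the last equality uses the fact that every vertex of $K_n$ has degree exactly two in the cover. Since both sides are equal, the middle inequality is an equality edge by edge, so $r_i+r_j=d(p_i,p_j)$ for every cover edge.

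The second step is immediate: applying the non-overlap constraint to $p_i$ and its nearest neighbor $q_i$ gives $r_i+r_{q_i}\le d(p_i,q_i)=\delta_i$, and since $r_{q_i}\ge0$ this yields $r_i\le\delta_i$. Combining with the previous step,
\[
d(p_i,p_j)\;=\;r_i+r_j\;\le\;\delta_i+\delta_j,
\]
which is exactly the condition for $(p_i,p_j)$ to be an edge of $\NNG$. The main (minor) obstacle is being careful about multiplicity when the cover uses a 2-cycle through $(p_i,p_j)$; the degree-counting identity above handles this uniformly, so no case split on odd cycle versus 2-cycle is needed.
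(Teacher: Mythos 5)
Your proof is correct and follows the same two-step logic as the paper's: each edge of the minimum cycle cover is tight ($r_i+r_j=d(p_i,p_j)$), and each radius satisfies $r_i\le\delta_i$, so cover edges satisfy $d(p_i,p_j)\le\delta_i+\delta_j$. The only difference is that you make the tightness step explicit via the degree-two summation argument (a clean complementary-slackness calculation using \autoref{thm:dual}), whereas the paper's proof asserts without detail that each cover edge ``can be represented by touching balls''; your version is a bit more self-contained but not a genuinely different route.
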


\begin{proof}
Each edge of the cycle cover can be represented by touching balls in a system of nonoverlapping balls.
Because each ball centered at $p_i$ in any system of nonoverlapping balls centered at the given points has radius at most $\delta_i$ (else it would overlap the ball of the nearest neighbor), two balls of such a system can touch only if the larger balls with radius $\delta_i$ touch or overlap.
\end{proof}

\begin{lemma}
\label{lem:NNG-conflicts}
A system of balls centered at the points $p_i$ is nonoverlapping if and only if, for each edge of $\NNG$, the two endpoints of the edge are nonoverlapping.
\end{lemma}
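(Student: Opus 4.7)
The plan is to handle the two directions separately, with one being essentially free and the other requiring a short deduction from the nearest-neighbor constraints.

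For the forward direction, I would simply observe that if every pair of balls is nonoverlapping then in particular this holds for the pairs corresponding to edges of $\NNG$, so there is nothing to prove. The content is entirely in the reverse direction.

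For the reverse direction, assume that $r_i + r_j \le d(p_i,p_j)$ holds whenever $(p_i,p_j)$ is an edge of $\NNG$. The first step is to show the \emph{a priori} bound $r_i \le \delta_i$ for every $i$. To see this, let $p_k$ be a nearest neighbor of $p_i$, so that $d(p_i,p_k) = \delta_i$. Since $d(p_i,p_k) = \delta_i \le \delta_i + \delta_k$, the pair $(p_i,p_k)$ is an edge of $\NNG$, and the hypothesis gives $r_i + r_k \le \delta_i$. Because radii are non-negative by definition, this yields $r_i \le \delta_i$.

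The second step is to dispatch the non-edges. If $(p_i,p_j)$ is not an edge of $\NNG$, then by definition $d(p_i,p_j) > \delta_i + \delta_j$, and combining this with the bounds $r_i \le \delta_i$ and $r_j \le \delta_j$ from the previous step gives
\[
r_i + r_j \;\le\; \delta_i + \delta_j \;<\; d(p_i,p_j),
\]
so this pair is nonoverlapping as well. The main (and essentially only) obstacle is noticing that one must invoke the edges from each point to its own nearest neighbor to obtain the individual radius bound $r_i \le \delta_i$; once this is in hand, the non-edges are eliminated by a one-line triangle-style estimate, and no subtler geometric argument is needed.
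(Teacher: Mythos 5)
Your proposal is correct and takes essentially the same approach as the paper: both first use the edge from each point to its nearest neighbor to deduce $r_i\le\delta_i$, and then observe that non-edges of $\NNG$ are too far apart to permit an overlap once all radii are bounded by the nearest-neighbor distances.
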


\begin{proof}
A ball centered at $p_i$ with radius greater than~$\delta_i$ overlaps the ball centered at the nearest neighbor of $p_i$, which is necessarily adjacent to $p_i$ in $\NNG$. Otherwise, if all balls have radius at most $\delta_i$, then they can overlap only if their centers are adjacent in $\NNG$.
\end{proof}

By \autoref{lem:cover-subgraph} we can find our minimum cycle cover by applying a weighted matching algorithm to $2\NNG$. The system of radii formed by averaging the dual variables of the matching in $2\NNG$ will give us an optimal set of nonoverlapping balls, just as it did in $2K_n$. For, the averaged radii will achieve the optimal value and will have no overlaps that are detected by the edges of $\NNG$; by \autoref{lem:NNG-conflicts} there can be no other overlaps. It remains to show that we can find a matching (and its dual variables) quickly in this graph. For this we will need some geometry, since in arbitrary metric spaces $\NNG$ can be complete.

\subsection{Weighted matching in separated bipartite graphs}
In this section we prove the existence of an efficient minimum weight perfect matching algorithm for bipartite graphs obeying a separator theorem. We will later show that the nearest-neighbor graph $\NNG$ and its bipartite double cover obey such a theorem in low-dimensional Euclidean spaces, allowing us to apply this algorithm.

\begin{definition}
Let $s$ be a real number in the range $0<s<1$.
Define an $s$-\emph{separator} of an $n$-vertex graph to be a subset $S$ of the vertices the removal of which allows the graph to be partitioned into two subgraphs, disconnected from each other and each having at most $sn$ vertices.
When $s$ is a constant (depending on some general family of graphs but not the specific graph in that family) we will omit it and call $S$ a \emph{separator}.
 \end{definition}

An efficient algorithm for weighted matching in graphs with small separators was already given by Lipton and Tarjan
\cite[Sec.~7]{LipTar-SICOMP-80}. They directly consider only planar graphs, but their method would also work for other graphs that have separator hierarchies.
However, they consider a different variant of matching, maximum weight matching on non-bipartite graphs, and are not explicit in how they maintain and update the analogue for that problem of the dual variables that we need. They also do not take advantage of more recent advances in shortest path algorithms for graphs with separators~\cite{HenKleRao-JCSS-97}. Here we describe an algorithm based on similar ideas to the Lipton--Tarjan algorithm that maintains the dual variables that we need and is faster by a logarithmic factor.

Suppose we are given a weighted graph $G$ with a separator $S$ (part of a separator hierarchy for $G$), such that $G\setminus S$ can be partitioned into two subgraphs $L$ and $R$, disconnected from each other, with $L$ and $R$ both having a number of vertices that is smaller by a constant factor.
For such a graph, Lipton and Tarjan~\cite[Sec.~7]{LipTar-SICOMP-80} describe a divide-and-conquer algorithm for maximum-weight matching (an equivalent problem to minimum-weight perfect matching)
that recursively matches $L$ and $R$, and then adds the separator vertices one at a time to the union of $L$ and $R$. When each separator vertex is added, the maximum matching may change along a single alternating path.

Lipton and Tarjan write that ``given a suitable representaton'', each alternating path can be found in the time for a single application of Dijkstra's algorithm. This, in turn, allows the most expensive term in the running time of their algorithm to be bounded by the product of 
the time for finding each path with the number of separator vertices. The ``suitable representation'' that they refer to is made more complicated by the possibility that $G$ might not be bipartite, but in the bipartite case it is essentially the system of dual variables that we need. However, their handling of these variables is lacking in detail, so we provide here a more explicit divide and conquer matching algorithm, with the following differences:

\begin{itemize}
\item We formulate the problem as finding a minimum-weight maximum-cardinality matching in a graph with non-negative edge weights. This slight generalization of minimum-weight perfect matching allows us to perform recursive calls without needing to guarantee that each subproblem has a perfect matching, and simplifies the definition of the dual variables relative to their definition for maximum-weight matching.
\item We recurse on the induced subgraphs $G[S\cup L]$ and $G[S\cup R]$ rather than on $L$ and $R$. This inclusion of the separator vertices in the recursive subproblems results in the calculation of two inconsistent systems of dual variables for these vertices, which must be reconciled when the subproblem solutions are combined.
\item We compute shortest paths using an algorithm of Henzinger et al.~\cite{HenKleRao-JCSS-97} rather than by Dijkstra's algorithm. This algorithm takes linear time per path, after a separator hierarchy has already been computed, saving a logarithmic time factor compared to Dijkstra.
\item We consider only bipartite graphs rather than the more general case of weighted matching in arbitrary graphs.
\end{itemize}

Because we recurse on $G[S\cup L]$ and $G[S\cup R]$, we us a separator hierarchy in which these two subgraphs (rather than $L$ and $R$) are the ones that are recursively subdivided.
In more detail, our algorithm performs the following steps:

\begin{enumerate}
\item Recursively compute a minimum-weight maximum-cardinality matching, and its corresponding system of dual variables, for the two subgraphs $G[S\cup L]$ and $G[S\cup R]$. Recall that the dual variables are numbers associated with each vertex such that, for each unmatched edge, the two numbers at its endpoints sum to at most the edge length, and for each matched edge they sum to exactly the edge length.
\item Combine the two subproblems to give a single matching (with fewer than the maximum number of matched edges) and valid system of dual variables for the whole graph $G$. To do so, for each vertex $s$ in $S$, we choose the dual variable for $s$ in~$G$ to be the minimum of the two values computed for $s$ in $G[L\cup S]$ and in $G[R\cup S]$, and we remove the matched edge incident to this vertex in one of the two subproblems, the subproblem that did not supply the minimum dual variable value. (If the two subproblems have equal dual variables at $s$, we choose arbitrarily which of the two matched edges to remove.)
\item Repeat the following steps until no more alternating paths can be found:
\begin{itemize}
\item Modify the original weight of each edge by subtracting the dual variables at its endpoints.
\item Add an artificial source vertex adjacent by zero-length edges to all unmatched vertices on one side of the bipartition. Orient the unmatched edges of the graph from the source side to the opposite side, and the matched edges in the other direction.
\item Use the algorithm of Henzinger et al. starting from this source vertex
to find the  alternating path of minimum modified weight between two unmatched vertices.
\item Augment the matching using the alternating path, trading matched and unmatched edges along the path.
\item Use the distances from the source vertex found by the path search to update the dual variables so that they remain valid.
\end{itemize}
\end{enumerate}

To update the dual variables, decrease each dual variable on the source side of the bipartition by its distance from the source, and increase each dual variable on the opposite side by its distance from the source. The sum of dual variables at the endpoints of each matched edge remains unchanged, the sum of variables on any shortest path edge increases to equal the edge length, and the sum of variables on any other edge remains at most equal to the edge length (else that edge would have supplied a shorter path).

The result of this algorithm is a maximum-cardinality matching on $G$, and a feasible system of dual variables, whose existence ensures that the matching has minimum weight (as any alternating cycle would have non-negative modified weight and therefore non-negative total cost).
This proves the following result:

\begin{lemma}
\label{lem:sparse-bipartite-matching}
Let $G$ be a bipartite graph given together with a separator hierarchy in which each subgraph of $p$ vertices has a separator of size $O(p^c)$, for some constant~$c$ with $0<c<1$. Then one can find both a minimum weight perfect matching of $G$, and the dual variables of the matching, in time $O(n^{1+c})$.
\end{lemma}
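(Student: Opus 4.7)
The algorithm is spelled out explicitly in the preceding steps, so the proof would reduce to verifying correctness of the merge and augmenting-path phases and then analyzing the running time of the recursion. I would organize it as three pieces: feasibility of the merged dual variables, correctness of the Hungarian-style update loop, and a recurrence for the total work.

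For the merge, the key observation is that because $L$ and $R$ are disconnected in $G\setminus S$, every edge of $G$ lies in at least one of the subproblem induced subgraphs $G[S\cup L]$ or $G[S\cup R]$. Replacing each separator vertex's dual value by the smaller of the two recursively computed values can only decrease dual sums at the endpoints of any edge, which weakens the unmatched-edge inequality $a_i+a_j\le d(p_i,p_j)$ in the feasible direction. Matched-edge tightness can however be broken on the side whose dual value was reduced; the algorithm deletes exactly one matched edge per separator vertex to restore tightness on the retained matched edges. After this, the standard Hungarian invariants hold, and in each augmenting iteration the usual potential update (subtract source-distance on one side of the bipartition, add it on the other) keeps matched edges tight, turns the augmenting shortest path into a tight path, and leaves all other unmatched-edge slacks non-negative.

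For the running time, let $T(n)$ be the time on an $n$-vertex graph equipped with its separator hierarchy. Each recursive call is on a subgraph with at most $\tfrac{n}{2}+O(n^c)$ vertices under a balanced separator hierarchy. Because the merge discards at most one matched edge per separator vertex, at most $|S|=O(n^c)$ augmenting-path phases remain at the top level. Each phase is a single-source shortest path computation on a graph that already comes with a separator hierarchy and whose edge costs have been made non-negative by subtracting duals, and so by the algorithm of Henzinger, Klein, Rao, and Subramanian it runs in $O(n)$ time. The non-recursive work at the root is therefore $O(n^c)\cdot O(n)=O(n^{1+c})$, and the recurrence
\[
T(n)\;\le\;2\,T\!\left(\tfrac{n}{2}+O(n^c)\right)+O(n^{1+c})
\]
solves to $T(n)=O(n^{1+c})$, since the per-level work grows by a factor of $2\cdot(1/2)^{1+c}=2^{-c}<1$ and therefore sums to a geometric series dominated by its root term.

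The main technical obstacle I anticipate is pinning down the claim that only $O(n^c)$ augmenting paths are required at each recursion level. This depends on the assertion that any vertex outside $S$ which was matched in its subproblem remains matched after the merge: its own dual is unaltered, and the matched edge incident to it is dropped only when its $S$-endpoint's dual was the one reduced, in which case the newly unmatched vertex is that $S$-endpoint rather than a vertex of $L$ or $R$. A careful bookkeeping argument of this kind — essentially the detail that Lipton and Tarjan left implicit — is what confirms that the deficit in the matching after merging is bounded by $|S|=O(n^c)$. The remaining ingredients (correctness of the reweighted shortest-path subroutine, applicability of Henzinger et al.\ to the induced subgraphs at each recursion level, and the master-theorem-style resolution of the recurrence) are routine.
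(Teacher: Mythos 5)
Your overall structure matches the paper's: verify dual feasibility at the merge, bound the number of post-merge augmenting paths by $|S|$, and solve a divide-and-conquer recurrence. Your dual-feasibility argument (taking the minimum dual at each separator vertex can only make unmatched-edge slacks more positive, and dropping the matched edge from the non-minimum subproblem restores tightness on what remains) is correct and agrees with what the algorithm is designed to maintain.

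However, the step you yourself flag as the crux is argued incorrectly. You claim that any vertex outside $S$ that was matched in its subproblem remains matched after the merge, on the grounds that when a matched edge $(s,v)$ with $s\in S$ is dropped, ``the newly unmatched vertex is that $S$-endpoint rather than a vertex of $L$ or $R$.'' This is backwards. When the algorithm removes the matched edge incident to $s$ in the non-minimum subproblem, $s$ typically retains its matched edge from the minimum subproblem, so $s$ stays matched; it is the other endpoint $v\in L$ (or $R$), which has no second matched edge, that becomes unmatched. So vertices outside $S$ do get unmatched by the merge. The paper avoids this bookkeeping entirely by a cardinality argument: since $L$ and $R$ are disconnected in $G\setminus S$, any maximum matching of $G$ splits into a matching of $G[S\cup L]$ and one of $G[S\cup R]$, hence the two recursive maximum matchings have combined size at least that of a maximum matching in $G$. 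The reconciliation deletes at most $|S|$ edges from this combined multiset, so the resulting (genuine) matching falls short of maximum cardinality by at most $|S|$; this gives the $O(|S|)$ bound on augmenting-path searches without tracking which specific vertices lose their match.

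A smaller point: the recurrence should not be written as $2\,T\!\left(\tfrac{n}{2}+O(n^c)\right)$. The two recursive calls are on $G[S\cup L]$ and $G[S\cup R]$ with $|L|+|R|+|S|=n$, so the subproblem sizes $n_1,n_2$ satisfy $n_1+n_2\le n+O(n^c)$ with each bounded by $sn+O(n^c)$ for the separator constant $s<1$, which need not be $\tfrac12$. One should therefore verify that $n_1^{1+c}+n_2^{1+c}\le \alpha\, n^{1+c}$ for some $\alpha<1$ over all admissible splits — which holds, since for fixed sum the exponent-$(1+c)$ sum is maximized at the extremal split and $(s)^{1+c}+(1-s)^{1+c}<1$ for $0<s<1$, $c>0$ — rather than plugging in the balanced case. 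The conclusion $T(n)=O(n^{1+c})$ is still correct, but the derivation as stated does not cover the actual recurrence.
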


\begin{proof}
The union of the matchings in $G[S\cup L]$ and $G[S\cup R]$ is at least as large as the single maximum matching in~$G$, so the reconciliation process (in which we remove at most $|S|$ edges from the union of the matchings) produces a single matching that differs in cardinality from the maximum by at most $|S|$. Therefore, the inner loop of the algorithm involves at most $|S|$ linear-time path searches. The time analysis of \autoref{lem:sparse-bipartite-matching} follows straightforwardly as a divide-and-conquer recurrence.
\end{proof}

\subsection{Low-ply neighborhood systems}

To show that the nearest-neighborhood graph has small separators, and hence that we can quickly solve weighted matching problems on it, we need some definitions from Teng~\cite{Ten-PhD-91}.

\begin{definition}
A \emph{$k$-neighborhood system} is a system of balls such that each ball contains at most $k$ centers of balls in the system.
\end{definition}

Clearly, the nearest-neighbor balls from which $\NNG$ is defined form a $1$-neighborhood system, because each contains only its own center.

\begin{definition}
The \emph{ply} of a system of balls is the maximum number of balls that have a common point of intersection. The \emph{kissing number} $\tau_d$ of $d$-dimensional Euclidean space is the maximum number of unit-radius balls that can touch a single central unit ball without overlapping each other.
\end{definition}

The kissing number is known to depend singly-exponentially on the dimension, but for us it is more important that it is constant for any constant dimension.

\begin{lemma}[Teng~\cite{Ten-PhD-91}, Lemma~3.2]
\label{lem:ply}
Let $k$ and $d$ be fixed constants. Then
every $k$-neighborhood system in $d$-dimensional Euclidean space has ply at most $\tau_d k=O(1)$.
\end{lemma}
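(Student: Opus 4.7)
The plan is to bound the ply of a $k$-neighborhood system by combining a greedy reduction that leverages the $k$-neighborhood condition with a classical kissing-number argument. Fix a point $p$ lying in the interior of balls $B_1,\ldots,B_m$ of the system, indexed so that $r_1\ge r_2\ge\cdots\ge r_m$, and aim to show $m\le\tau_d k$.

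First I would process the balls in decreasing order of radius, calling $B_i$ a \emph{representative} whenever its center $c_i$ is not contained in any previously selected representative. Every non-representative $B_j$ then has $c_j\in B_i$ for some earlier representative $B_i$; the $k$-neighborhood condition caps the number of centers inside $B_i$ at $k$ (including $c_i$ itself), so each representative absorbs at most $k-1$ non-representatives. This yields $m\le kR$, where $R$ is the number of representatives, and reduces the task to showing $R\le\tau_d$.

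The main obstacle is this second bound, which I would obtain via a law-of-cosines/kissing argument. For any two representatives $B_i,B_j$ with $r_i\ge r_j$ and centers distinct from $p$, the representative condition gives $\|c_i-c_j\|\ge r_i$, while $p$ lying strictly inside both balls gives $a:=\|c_i-p\|<r_i$ and $b:=\|c_j-p\|<r_j$. Applying the law of cosines to the angle $\theta$ at $p$, combined with the elementary inequality $a^2-ab+b^2\le\max(a,b)^2\le r_i^2$, yields $\cos\theta\le 1/2$ and hence $\theta\ge 60^\circ$. The unit vectors $(c_i-p)/\|c_i-p\|$ attached to the representatives are therefore pairwise separated by at least $60^\circ$ in angle as seen from $p$, and the extremal characterization of the kissing number---the outer centers of an optimal kissing configuration provide the maximum set of unit vectors in $d$-dimensional Euclidean space with pairwise angular distance at least $60^\circ$---forces $R\le\tau_d$.

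Finally I would clean up the degenerate case in which some representative $B_i$ satisfies $c_i=p$. Since $p$ lies strictly inside every ball of the collection, any other ball $B_j$ satisfies $\|c_j-p\|<r_j\le r_i$, which forces $c_j\in B_i$ and rules out any other representative; hence $R=1\le\tau_d$ in this case, and $m\le k$ follows from the $k$-neighborhood condition applied to $B_i$ alone. Combining the two cases completes the inequality $m\le\tau_d k=O(1)$.
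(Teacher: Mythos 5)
The paper imports this lemma directly from Teng's thesis and supplies no proof of its own, so there is nothing in the source to compare against. That said, your argument is correct and is essentially the standard proof of such ply bounds. The greedy selection of representatives in decreasing radius order, charging each non-representative to the first representative containing its center, correctly gives $m\le kR$ under the paper's open-ball definition of containment. The angular step is also sound: from $\lVert c_i-c_j\rVert\ge r_i\ge r_j$, $a<r_i$, $b<r_j$, and the identity $a^2-ab+b^2\le\max(a,b)^2<r_i^2$, the law of cosines gives $\cos\theta<\tfrac12$, and the characterization of $\tau_d$ as the maximum number of unit vectors with pairwise angle at least $60^\circ$ then bounds $R\le\tau_d$. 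Two small points worth tightening if you write this up formally: in the degenerate case you assert $r_j\le r_i$ without justification, which follows only after observing that a representative centered at $p$ must be the first (largest) ball processed, since $p$ lies in $B_1$ and $B_1$ is always a representative; and when charging non-representatives you should fix a rule (e.g.\ charge to the earliest representative containing the center) so that the count of at most $k-1$ per representative is unambiguous. Neither is a real gap, just bookkeeping.
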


The following definition is not from Teng (who uses different terminology) but is standard in graph theory.

\begin{definition}
The \emph{degeneracy} of a graph $G$ is the smallest value $d$ such that every subgraph of $G$ contains a vertex of degree at most $d$.
\end{definition}

\begin{lemma}[Teng~\cite{Ten-PhD-91}, Theorem 4.2]
For any $k$-neighborhood system in a Euclidean space of bounded dimension,
the intersection graph of the balls has bounded degeneracy.
\end{lemma}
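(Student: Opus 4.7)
My plan is to combine the ply bound of \autoref{lem:ply} with a volume-integration argument, applied to the ball of smallest radius in an arbitrary induced subgraph. Concretely, I would show that for every subset $T$ of balls in the system, the ball $B\in T$ of smallest radius has only $O_{d,k}(1)$ neighbors in the intersection graph restricted to $T$. Since this property holds hereditarily for every induced subgraph, it is exactly the definition of bounded degeneracy.

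Write $B=B(c,r)$. Any other ball $B_i=B(c_i,r_i)\in T$ intersecting $B$ satisfies $r_i\ge r$ and $\|c-c_i\|\le r+r_i\le 2r_i$ by minimality of $r$. I would split the neighbors of $B$ into two classes according to whether $c\in B_i$. The first class consists of balls $B_i$ containing $c$; since they all cover the single point $c$, \autoref{lem:ply} immediately bounds this class by the ply $\tau_d k$. The second class consists of balls $B_i$ with $c\notin B_i$, which forces $\|c-c_i\|-r_i\in(0,r]$, so $c$ lies within distance $r$ of $\partial B_i$. Bounding this class is the nontrivial step: I would show the volume lower bound $\mathrm{vol}(B_i\cap B(c,2r))=\Omega_d(r^d)$ for every such $B_i$, then apply the ply inequality integrated over $B(c,2r)$,
\[
|\text{Class 2}|\cdot\Omega_d(r^d)\;\le\;\sum_i\mathrm{vol}(B_i\cap B(c,2r))\;\le\;\tau_d k\cdot\mathrm{vol}(B(c,2r))\;=\;O_d(r^d),
\]
to conclude $|\text{Class 2}|=O_{d,k}(1)$. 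Summing the two classes yields the required constant bound on $\deg_T(B)$.

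The main obstacle is the volume lower bound in the second class, since when $r_i\gg r$ the piece $B_i\cap B$ is a lens that can be arbitrarily thin, so a naive volume bound against $B$ itself is useless. What saves the argument is enlarging the reference ball from $B$ to $B(c,2r)$: the constraint that $c$ lies within $r$ of $\partial B_i$ means that $\partial B_i$ cuts $B(c,2r)$ near its center, so no matter how large $r_i$ becomes, $B_i\cap B(c,2r)$ contains a cylindrical slab around the segment from $c$ to $c_i$ of length $\Omega(r)$ and perpendicular cross-sectional radius $\Omega(r)$. A short direct computation using the explicit bounds $\|c-c_i\|\le r+r_i$, $r_i\ge r$, and the outer radius $2r$ (a Pythagorean evaluation at the midpoint of the chord) confirms the $\Omega(r^d)$ lower bound and closes the argument.
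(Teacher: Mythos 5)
The paper does not reproduce a proof of this lemma; it simply cites Teng's thesis (Theorem~4.2), so there is no in-paper argument to compare against. Your proof is correct and, to my knowledge, follows the standard volume-packing route for bounded degeneracy of low-ply ball systems: bound the degree of the smallest-radius ball $B=B(c,r)$ in every induced subgraph, split its intersecting neighbors $B_i=B(c_i,r_i)$ (all with $r_i\ge r$) into those containing $c$ and those not, dispose of the first class by the ply bound at the single point $c$, and dispose of the second by a packing count inside the enlarged reference ball $B(c,2r)$. Your key observation that one must enlarge $B$ to $B(c,2r)$ to sidestep the thin-lens degeneracy when $r_i\gg r$ is exactly the crux, and the reduction of degeneracy to the hereditary min-degree statement is the right framing.

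One small cleanup on the volume lower bound for the second class: rather than exhibiting a cylindrical slab, it is slightly cleaner to inscribe a ball. Writing $\epsilon=\|c-c_i\|-r_i\in[0,r]$ and $u$ for the unit vector from $c$ toward $c_i$, set $q'=c+(\epsilon+r/2)\,u$. Then $\|q'-c_i\|=r_i-r/2$, so $B(q',r/2)\subseteq B_i$, and $\|q'-c\|=\epsilon+r/2\le 3r/2$, so $B(q',r/2)\subseteq B(c,2r)$; hence $\mathrm{vol}\bigl(B_i\cap B(c,2r)\bigr)\ge V_d(r/2)^d$ where $V_d$ is the unit-ball volume, and the integrated ply bound gives at most $4^d\tau_d k$ neighbors of this type. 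Two minor points you should adjust: under the paper's open-ball containment convention the interval should be $\epsilon\in[0,r]$, not $(0,r]$ (this does not affect the estimate); and the argument implicitly assumes $r>0$, which is harmless here since the $k$-neighborhood systems the paper actually uses (nearest-neighbor balls of distinct points) have strictly positive radii, but is worth stating if one wants the lemma in full generality.
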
 

In particular, this implies that $\NNG$ has $O(n)$ edges.
This immediately gives a speedup from $O(n^3)$ to $O(n^2\log n)$, by applying the $O(mn+n^2\log n)$ time bound for matching on the sparse bipartite graph $2\NNG$ and using the degeneracy bound to substitute $m\le n\Delta=O(n)$. But we can do better using more graph structure, a separator theorem for overlap graphs.

\begin{lemma}[Teng~\cite{Ten-PhD-91}, Theorem 5.3]
For any system of balls of bounded ply in a Euclidean space of bounded dimension $d$,
the intersection graph of the balls has separators of size $O(n^{1-1/d})$
\end{lemma}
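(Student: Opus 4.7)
The plan is to adapt the centerpoint and random sphere argument of Miller, Teng, Thurston, and Vavasis. First I would lift the $n$ ball centers in $\mathbb{R}^d$ to points on the unit sphere $S^d\subset\mathbb{R}^{d+1}$ by stereographic projection. By the centerpoint theorem, the lifted points have a centerpoint $c$: every closed halfspace through $c$ contains at least $n/(d+2)$ of them. After a rotation sending $c$ to the origin, I would pick a hyperplane through the origin uniformly at random, intersect it with $S^d$ to obtain a random great sphere, and project it back down via the inverse stereographic map to obtain a $(d-1)$-sphere $\Sigma$ in $\mathbb{R}^d$. The separator $S$ will be the balls that cross $\Sigma$; the two pieces $L$ and $R$ will be the balls lying strictly inside and strictly outside $\Sigma$.

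The balance condition comes directly from the centerpoint property: under stereographic projection each open halfspace of $\mathbb{R}^{d+1}$ through the origin corresponds to the inside or the outside of $\Sigma$, so each of $L$ and $R$ contains at most $\bigl((d+1)/(d+2)\bigr)n$ ball centers, giving a fixed separation ratio $s<1$ as required by the \autoref{definition} of separator.

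The heart of the proof, and its main obstacle, is bounding the expected size of $S$. For a given ball $B_i$ of radius $\rho_i$, its lifted image on $S^d$ is an approximate spherical cap whose $(d-1)$-dimensional boundary length is proportional to $\rho_i/(1+\|p_i\|^2)$ times a dimension-dependent factor. The probability that a uniformly random great sphere through the origin crosses this cap is proportional to this boundary length, so by linearity of expectation $\mathbb{E}[|S|] = O\bigl(\sum_i \rho_i'\bigr)$, where $\rho_i'$ denotes the lifted radius. The bounded-ply hypothesis (\autoref{lem:ply}) is what controls this sum: since no point of $S^d$ lies in more than $O(1)$ of the lifted balls, a packing/volume argument (comparing $\sum \rho_i'^{\,d}$, which is $O(1)$ by ply and the finite volume of $S^d$, against $\sum \rho_i'$ via the power-mean inequality with $n$ terms) bounds $\sum\rho_i'$ by $O(n^{1-1/d})$. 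This gives $\mathbb{E}[|S|]=O(n^{1-1/d})$, and hence some choice of great sphere realizes this bound, yielding a separator of the required size.

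Finally, one verifies that the separator really disconnects $L$ from $R$ in the intersection graph: if two balls $B_i,B_j$ intersect and neither crosses $\Sigma$, then since $\Sigma$ does not cut the connected set $B_i\cup B_j$, they must lie on the same side. This completes the existence argument. The derandomization, if a constructive separator is needed, follows by choosing among $O(n)$ candidate great spheres determined by the lifted points, but existence alone suffices for the statement.
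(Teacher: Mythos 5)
The paper cites this result from Teng's thesis rather than proving it, and your outline does recover the Miller--Teng--Thurston--Vavasis sphere-separator argument that Teng uses, so the overall plan is the right one. However, two steps are wrong as written. First, ``a rotation sending $c$ to the origin'' cannot work: the centerpoint $c$ of the lifted points lies strictly inside $S^d$, and a rotation of $\mathbb{R}^{d+1}$ fixes the center of the sphere, so no rotation carries $c$ to the origin. What is actually used is a conformal (M\"obius) transformation of $S^d$ --- a rotation composed with a dilation of $\mathbb{R}^d$ conjugated through stereographic projection --- and one must be careful with the logic: conformal maps do not preserve centerpoints, so it is not that ``the centerpoint is moved to the origin''; rather, one shows as a separate lemma that the map can be chosen so that the origin \emph{is} a centerpoint of the transformed points. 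Conformality also matters downstream, since it guarantees that spherical caps remain spherical caps and ply is unchanged, which your volume argument needs.

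Second, the probability that a uniformly random great sphere meets a cap of angular radius $\theta$ on $S^d$ is $\Theta(\theta)$, not $\Theta(\theta^{d-1})$. The $(d-1)$-dimensional boundary measure you invoke is proportional to $\theta^{d-1}$ and only matches the crossing probability when $d=2$; your formula $\mathbb{E}[|S|]=O\bigl(\sum_i\rho_i'\bigr)$ is correct if $\rho_i'$ denotes the angular radius of the lifted cap, but the ``boundary length'' justification is wrong in general dimension. The remainder of the argument --- the bound $\sum_i \rho_i'^{\,d}=O(1)$ from bounded ply and the finite volume of $S^d$, the power-mean step yielding $\sum_i\rho_i'=O(n^{1-1/d})$, and the observation that two intersecting balls on opposite sides of $\Sigma$ would both have to cross $\Sigma$ --- is sound.
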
 

The following deterministic algorithmic version of this separator theorem is from~\cite{EppMilTen-FI-95}:

\begin{lemma}[Eppstein et al.~\cite{EppMilTen-FI-95}, Theorems~5.1 and~5.8]
\label{lem:deterministic-separator}
For any system of balls of bounded ply in a Euclidean space of bounded dimension $d$,
the intersection graph of the balls has separators of size $O(n^{1-1/d})$ that can be found deterministically in linear time from the balls.
By recursively constructing a hierarchy of separators, the overlap graph can be constructed from the balls in time $O(n\log n)$.
\end{lemma}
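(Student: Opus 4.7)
The plan is to follow the geometric-separator framework of Miller, Thurston, and Vavasis, derandomize the random hyperplane step, and then wrap the single-level construction in an appropriate divide-and-conquer to produce both the overlap graph and its separator hierarchy.

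First I would establish the single-level separator. Lift each of the $n$ ball centers to the unit sphere $S^d \subset \mathbb{R}^{d+1}$ by inverse stereographic projection. Compute an approximate centerpoint of the lifted set -- a point $q$ such that every closed half-space through $q$ contains at least a constant fraction of the lifted centers -- using a linear-time iterated Radon-partition procedure; since $d$ is constant, the exponential-in-$d$ overhead of this procedure is hidden in the constant. Apply a Möbius transformation that sends $q$ to the origin; this map carries Euclidean balls to Euclidean balls and preserves bounded ply (the bounded ply condition passes from the original system to the conformally transformed one with only a constant blow-up). Any hyperplane through the origin then induces, by projecting back through the stereographic map, a sphere $\Sigma$ in $\mathbb{R}^d$ such that the interior and exterior each contain at most $(2/3)n$ centers.

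Second, I would bound the separator size. The balls cut by $\Sigma$ form the separator $S$; by a standard volume/shell argument that combines the bounded ply from \autoref{lem:ply} with the $(d-1)$-dimensional "area" of $\Sigma$ measured in the local radius scale, the number of balls crossed is $O(n^{1-1/d})$. To derandomize the hyperplane pick, replace the random choice by the best member of a constant-size family of candidate hyperplanes through the origin; the approximate-centerpoint guarantee ensures that at least one member simultaneously splits the centers roughly evenly and attains the area bound, and the best one can be located in linear time.

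Third, I would produce the overlap graph and the hierarchy itself. Using a grid or bucketing structure tailored to bounded-ply ball systems, all intersecting pairs can be listed in $O(n \log n)$ time; by the degeneracy lemma stated just before this one, there are only $O(n)$ edges to output. Then I would recurse on the two sides, each of which is itself a subsystem of balls of bounded ply inside a $d$-dimensional Euclidean space, so the geometric hypotheses persist at every level. The time recurrence $T(n) \le T(n_1) + T(n_2) + O(n)$ with $n_1, n_2 \le (2/3)n$ and $n_1 + n_2 \le n$ solves to $T(n) = O(n \log n)$, and the separator computed at the root of the recursion is itself returned in linear time as claimed.

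The main obstacle will be carrying out the derandomization within a strict linear-time budget while still certifying both the balance condition and the $O(n^{1-1/d})$ area bound. The randomized Miller--Thurston argument gives each of these in expectation over a continuous distribution of hyperplanes, and one must argue that a specific finite set of deterministic candidates, obtainable without further geometric preprocessing, always contains a satisfactory choice. Once that hurdle is cleared, the conformal preservation of the bounded-ply property, the shell counting argument, and the divide-and-conquer recurrence are all routine.
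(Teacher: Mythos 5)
This lemma is quoted from prior work~\cite{EppMilTen-FI-95} and the paper offers no proof of its own, so there is no in-paper argument to compare against; what you have written is a reconstruction of the cited result. At a high level your reconstruction is on the right track: the stereographic lift, a deterministic approximate centerpoint via iterated Radon partitions, conformal recentering, a great-circle cut, and a ply-based bound on the number of crossed balls is indeed the Miller--Thurston--Vavasis framework that~\cite{EppMilTen-FI-95} derandomizes, and your divide-and-conquer recurrence for the hierarchy has the correct shape.

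The one genuine gap is the derandomization step, and to your credit you flagged it yourself. The assertion that ``a constant-size family of candidate hyperplanes through the origin'' must contain one that is simultaneously balanced and achieves the $O(n^{1-1/d})$ crossing bound does not follow from the randomized analysis: Miller et al.\ establish a constant success probability over a \emph{continuous} distribution of great circles, and that by itself does not hand you a specific finite, preprocessing-free candidate set with a guaranteed good member. The deterministic linear-time construction in~\cite{EppMilTen-FI-95} is substantially more delicate than ``evaluate a constant number of hyperplanes and take the best''---it reduces the continuous choice of separating sphere to a discrete search guided by computed statistics of the point and radius distribution, and uses a prune-and-search to zero in on an acceptable sphere in linear time. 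Until that step is replaced with an actual argument rather than an acknowledgement that one is needed, your proposal is an outline with a hole exactly where you anticipated it, not a proof.
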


Another way of expressing the existence of small separators (of size a fractional power of $n$) is that overlap graphs are graphs of \emph{polynomial expansion}~\cite{DvoNor-SSS-15,NesOss-SGSA-12}. We remark that any hierarchy of separators for $\NNG$ can be transformed into a hierarchy of separators for $2\NNG$, simply by including both copies of each vertex of a separator for $\NNG$ in the corresponding separator for $2\NNG$.

\subsection{Convex distance functions}

A \emph{convex distance function} may be defined from any centrally symmetric convex body $B$ on $d$-dimensional Euclidean space by defining the distance between any two points $p$ and $q$ to be the infimum of the positive numbers $s$ such that a copy of $B$, centered at $p$ and expanded by a factor of $s$, contains $q$. The distance defined in this way obeys the axioms of a metric space, and gives the Euclidean space the structure of a normed vector space. Conversely, the distances in any bounded dimensional normed vector space can be defined in this way from the unit ball of the norm. Much, but not all, of the same theory for Euclidean balls goes through in these spaces. In particular, we have:

\begin{lemma}[Teng~\cite{Ten-PhD-91}, Lemma 7.2]
Let $k$ be a fixed constant, fix a convex distance function on a normed metric space of fixed dimension $d$, and let $\tau$ be the kissing number for that space (the maximum number of non-overlapping unit balls that can simultaneously touch a single central unit ball).
Then
Every $k$-neighborhood system in the space has ply at most $\tau_d k\le (3^d-1)k=O(1)$.
\end{lemma}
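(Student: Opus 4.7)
The plan is to generalize Teng's Lemma~3.2 (the Euclidean case) to convex distance functions, replacing its Euclidean law-of-cosines step with a direct triangle-inequality argument in the norm; the combinatorial structure and the final volume bound on $\tau$ are unchanged.

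Let $x$ realize the maximum ply, lying in balls $B_i$ with centers $c_i$ and radii $r_i$ for $i = 1,\dots,m$, sorted so that $r_1 \le \cdots \le r_m$. After translating so $x = 0$, we have $\|c_i\| \le r_i$ under the norm $\|\cdot\|$ of the convex body. I would cover the unit norm sphere by at most $\tau$ regions of diameter at most $1$, and place each $B_i$ into the equivalence class of its direction $c_i/\|c_i\|$. The key geometric step is that if two directions lie in a common region and $r_i \le r_j$, then $c_i \in B_j$; this follows from a short triangle-inequality calculation using the in-region bound on the difference of the normalized directions together with $\|c_i\| \le r_i \le r_j$ and $\|c_j\| \le r_j$. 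Granting this, every ball in an equivalence class has its center inside the single largest-radius ball of the class, so the $k$-neighborhood property caps each class at $k$ balls, and summing over classes gives $m \le \tau k$.

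To bound $\tau \le 3^d - 1$, I would choose representatives $v_1,\dots,v_\tau$ on the unit norm sphere, maximal with the property that any two are at norm distance greater than $1$. Maximality forces every sphere point to lie within norm distance at most $1$ of some $v_\ell$, yielding the required cover by $\tau$ regions of diameter at most $1$. The standard volume argument then bounds $\tau$: the balls of norm radius $1/2$ centered at $v_1,\dots,v_\tau$ and at the origin are pairwise interior-disjoint and all lie in the ball of norm radius $3/2$ about the origin, so $(\tau + 1)(1/2)^d \operatorname{vol}(B) \le (3/2)^d \operatorname{vol}(B)$, giving $\tau + 1 \le 3^d$.

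The main obstacle is the in-region geometric step: without Euclidean inner products, the bound on $\|c_i - c_j\|$ has to be extracted from the triangle inequality and the controlled cone diameter rather than from an angle-based argument, and strict containment needs care at the boundary case $\|c_i - c_j\| = r_j$, for example by a generic-position argument or a slightly finer cone cover. Once this geometric ingredient is in place, the combinatorial $k$-neighborhood accounting and the volume bound on $\tau$ carry over verbatim from Teng's Euclidean proof.
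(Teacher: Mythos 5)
The paper does not prove this lemma; it cites it from Teng's thesis (Lemma~7.2), so there is no in-paper proof to compare against. Evaluating your sketch on its own merits: the triangle-inequality replacement for the Euclidean angle argument is sound (writing $c_i-c_j$ as $\|c_i\|(u_i-u_j)+(\|c_i\|-\|c_j\|)u_j$ or the symmetric form and applying the norm's triangle inequality does give $\|c_i-c_j\|<\max(\|c_i\|,\|c_j\|)+\min(\|c_i\|,\|c_j\|)\,\|u_i-u_j\|-\min(\|c_i\|,\|c_j\|)\le r_j$ whenever $\|u_i-u_j\|\le 1$, with strictness coming for free from $\|c_i\|<r_i$), and the volume packing bound $\tau\le 3^d-1$ is correct.

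However, there is a genuine gap in the covering step, and it cannot be patched as stated. A maximal set $v_1,\dots,v_\tau$ on the unit sphere at pairwise norm distance greater than $1$ yields a cover of the sphere by balls of \emph{radius} $1$, hence by regions of diameter up to $2$, not of diameter $1$ as you claim. Your key geometric step needs $\|u_i-u_j\|\le 1$ for any two directions grouped together, so a radius-$1$ cover does not suffice: two small balls near opposite ends of a region can have $\|u_i-u_j\|$ close to $2$, and then the largest ball in the region (which need not sit at the region's center) has no reason to contain both centers. Worse, $\tau$ regions of diameter at most $1$ in general \emph{cannot} cover the sphere: already in Euclidean $\mathbb{R}^3$ the kissing number is $12$, while a spherical cap of chordal diameter $1$ has angular radius $30^\circ$ and area $2\pi(1-\cos 30^\circ)\approx 0.84$, so at least $15$ such caps are needed. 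So the static cover is the wrong tool.

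The correct way to get the bound $\tau k$ is to replace the static cover by a greedy choice of representatives ordered by decreasing radius: process the balls from largest to smallest; when $B_j$ is processed, if $\|u_j-u_r\|>1$ for every previously chosen representative $r$, make $j$ a new representative, otherwise assign $j$ to any representative $r$ with $\|u_j-u_r\|\le 1$. The representatives' directions are pairwise more than $1$ apart, so there are at most $\tau$ of them, and each group's representative has the largest radius in its group by the processing order. Now the geometric step is applied only between a group member and its representative, where you do have $\|u_j-u_r\|\le 1$ together with $r_j\le r_r$, so $c_j\in B_r$; thus $B_r$ contains the centers of every ball in its group, which caps the group size at $k$ by the $k$-neighborhood property, giving ply at most $\tau k$. (A ball whose center coincides with the common intersection point can be assigned to any group, since its center lies in every ball through that point.) With this reorganization the rest of your argument goes through.
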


\begin{lemma}[Teng~\cite{Ten-PhD-91}, Lemma 7.4]
Let $k$ be a fixed constant, and fix a convex distance function on a normed metric space of fixed dimension $d$. Then the intersection graph of any $k$-neighborhood system has bounded degeneracy.
\end{lemma}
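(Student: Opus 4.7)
The plan is to derive bounded degeneracy from the bounded-ply result of the immediately preceding lemma, via a ``smallest ball'' packing argument generalized to the convex distance function $d^*$. By that preceding lemma, any $k$-neighborhood system under $d^*$ has ply at most $\rho := \tau_d k = O(1)$, where $\tau_d \le 3^d-1$ is the kissing number of the normed space. It suffices to prove that every induced subgraph of the intersection graph contains a vertex of degree $O(1)$, since this is equivalent to bounded degeneracy.

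First, I would fix an arbitrary induced subgraph $H$ and select the ball $B_i$ of smallest radius $r_i$ among the vertices of $H$. Every neighbor $B_j$ of $B_i$ in $H$ satisfies $r_j \ge r_i$ and $d^*(c_i,c_j) \le r_i + r_j$. I would partition the neighbors of $B_i$ into two classes: \emph{containing neighbors}, those with $d^*(c_i,c_j) \le r_j$ (equivalently, $c_i$ lies in $B_j$), and \emph{grazing neighbors}, those with $r_j < d^*(c_i,c_j) \le r_i+r_j$ (so $c_i \notin B_j$ but $B_j$ still overlaps $B_i$). The containing neighbors are immediately bounded by the ply at the point $c_i$: there are at most $\rho - 1$ of them, since $B_i$ itself covers $c_i$.

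To bound the grazing neighbors, I would use a kissing-type argument. For each grazing neighbor $B_j$, the boundary $\partial B_j$ passes within $d^*$-distance $r_i$ of $c_i$, which determines a ``contact direction'' from $c_i$ toward $c_j$. The goal is to show that if two grazing neighbors have contact directions that are too close, then either one ball contains the center of the other (which would push that neighbor into the containing class, a contradiction) or the balls share an interior point that would ultimately exhaust the ply budget. Since the unit ball of $d^*$ is a centrally symmetric convex body in $\mathbb{R}^d$, the finiteness of its kissing number $\tau_d$ limits the number of sufficiently separated contact directions to $O(\tau_d)$; combined with the ply bound this gives only $O(\tau_d \rho)$ grazing neighbors. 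Adding the containing-neighbor bound, the degree of $B_i$ in $H$ is $O(1)$, completing the proof.

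The main obstacle will be making the contact-direction argument rigorous for an arbitrary convex distance, since Euclidean intuition about perpendicular tangent hyperplanes and orthogonal projections does not translate directly to general norms. I would handle this either intrinsically, using the central symmetry of the unit ball of $d^*$ to argue about antipodal contacts of the scaled translates defining the $B_j$, or extrinsically by invoking John's ellipsoid theorem to sandwich the unit ball of $d^*$ between two Euclidean balls of aspect ratio at most $\sqrt{d}$, which lets us import the Euclidean kissing/packing argument at the cost of constants depending only on $d$. Either way, the resulting degree bound depends only on $k$ and $d$, which is all that is required.
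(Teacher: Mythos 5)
The paper does not prove this lemma; it is cited verbatim from Teng's thesis (his Lemma~7.4) and used as a black box, so there is no in-paper argument to compare against. On its own merits, your overall strategy---take the smallest-radius ball $B_i$ in an arbitrary induced subgraph and bound its degree using the ply bound of the preceding lemma---is the right one, and the bound on containing neighbors (those $B_j$ with $c_i\in B_j$) by $\rho-1$ is correct. The grazing-neighbor step, however, is a genuine gap and not a detail you can defer. The kissing number $\tau_d$ counts pairwise non-overlapping unit translates touching a central unit ball; your grazing neighbors have heterogeneous radii $r_j\ge r_i$, merely overlap $B_i$ rather than touch it, and may overlap one another. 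There is no fixed angular threshold at which two grazing ``contact directions'' force an intersection, because the threshold that would be needed depends on the ratio $r_j/r_i$, which is unbounded, so ``$\tau_d$ limits the number of sufficiently separated directions'' does not combine with the ply bound in the way you sketch. The John's-ellipsoid fallback relocates rather than closes the gap: after passing to the Euclidean norm you would still need exactly this packing estimate.

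A cleaner route drops the containing/grazing dichotomy and uses a uniform volume argument, which works for any norm because ball volume scales as $r^d$. For each neighbor $B_j=B(c_j,r_j)$ of $B_i=B(c_i,r_i)$ in the induced subgraph (so $r_j\ge r_i$ and $d^*(c_i,c_j)\le r_i+r_j$), let $p_j$ be the point on the segment from $c_j$ toward $c_i$ at distance $\min\{d^*(c_i,c_j),\,r_j-r_i\}$ from $c_j$. Then the triangle inequality gives $B(p_j,r_i)\subseteq B_j$, and a direct computation gives $d^*(c_i,p_j)\le 2r_i$, so every shrunken ball lies inside $B(c_i,3r_i)$. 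Since each $B(p_j,r_i)$ is contained in the corresponding $B_j$, the shrunken balls inherit the ply bound $\rho=\tau_d k$, and comparing volumes in $B(c_i,3r_i)$ yields at most $3^d\rho$ of them. Hence the degree of $B_i$ in the induced subgraph is at most $3^d\tau_d k=O(1)$, which is exactly bounded degeneracy. I would suggest replacing the kissing-number sketch by this shrink-and-pack argument; it is shorter, avoids any angular reasoning, and holds verbatim for every centrally symmetric convex distance function.
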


\begin{lemma}[Teng~\cite{Ten-PhD-91}, Theorem 7.1]
\label{lem:cdf-separator}
Let $k$ be a fixed constant, and fix a convex distance function on a normed metric space of fixed dimension $d$. Then the intersection graph of any $k$-neighborhood system has separators of size $O(n^{1-1/d})$ that can be found by a randomized algorithm in linear expected time from the balls.
By recursively constructing a hierarchy of separators, the overlap graph can be constructed from the balls in randomized expected time $O(n\log n)$.
\end{lemma}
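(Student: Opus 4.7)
The plan is to adapt the centerpoint-and-random-separator argument of Miller, Teng, Thurston, and Vavasis that underlies the Euclidean version (\autoref{lem:deterministic-separator}), replacing Euclidean spheres with level sets $\{x : d^*(x,c) = \rho\}$ of the convex distance function---that is, translated scalings of $\partial B$. The preceding ply lemma supplies the key combinatorial input: the system has ply $\tau_d k = O(1)$. With bounded ply in hand, the entire argument reduces to producing a randomized surface $\Sigma$ of this form that simultaneously (a) cuts few balls and (b) leaves a constant fraction of centers strictly inside and strictly outside.

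I would first rescale so that all centers and balls lie inside a unit copy of $B$. For step (a), fix any $c$ and choose $\rho$ uniformly in a constant-length interval; a ball of radius $r_i$ is cut by $\Sigma$ exactly when $|d^*(p_i,c) - \rho|\le r_i$, which happens with probability $O(r_i)$. Bounded ply forces $\sum r_i^d = O(1)$, since the summed $d$-volumes are at most $\tau_d k$ times the volume of $B$, so H\"older's inequality gives expected cut count $\sum O(r_i) = O(n^{1-1/d})$. For step (b), take $c$ to be an approximate centerpoint of the $p_i$; by a Helly-type argument on the nested convex level sets of $d^*(\cdot,c)$, a constant fraction of choices of $\rho$ yield balanced partitions, so $O(1)$ independent random trials filtered by Markov's inequality produce a valid separator in expected linear time from the input balls.

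The hard part is step (b). In the Euclidean case, stereographic projection converts a random sphere into a random hyperplane through a point on $S^d$, and centerpoints on the sphere automatically give balance via Helly's theorem; this is what makes \autoref{lem:deterministic-separator} derandomizable. A general convex distance function admits no such conformal lifting, so one must work in the ambient affine space directly, combining an approximate centerpoint with amplification over independent random $\rho$ values to force a balanced split with constant probability---and this is where the randomization in the statement comes from. Once a single-level separator is available, the hierarchy follows by standard divide-and-conquer: each level costs linear expected time, and the recurrence $T(n) = T(sn) + T((1-s)n) + O(n)$ solves to $O(n \log n)$ in total.
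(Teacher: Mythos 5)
The paper does not prove this lemma at all: it is cited verbatim as Theorem 7.1 of Teng's thesis, so there is no in-paper argument to compare against. Your sketch is therefore an attempt to reconstruct an external result, and it gets the cut-count half right but has a real gap in the balance half.

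Your step (a)---bounded ply gives $\sum r_i^d = O(1)$, hence by H\"older $\sum r_i = O(n^{1-1/d})$, so a uniformly random scale $\rho$ over a constant-length interval cuts $O(n^{1-1/d})$ balls in expectation---is sound, and so is the hierarchy/recurrence paragraph. The problem is step (b). The centerpoint property of $c$ controls only affine halfspaces through $c$; it says nothing about how many centers lie inside the nested convex bodies $\{x : d^*(x,c)\le\rho\}$. That fraction is monotone in $\rho$, but the measure of $\rho$-values giving a balanced split can be arbitrarily small or even zero. For example, if the $p_i$ are spread roughly uniformly over the single level set $\{x : d^*(x,c)=1\}$, then $c$ is a centerpoint by symmetry and the nearest-neighbor balls form a bounded-ply system, yet no scaled copy of $B$ centered at $c$ splits the points nontrivially: the inside count jumps from $0$ to $n$ at $\rho=1$. ``Amplification over independent random $\rho$'' cannot repair this, since you are re-sampling from a distribution that assigns zero probability to balanced cuts. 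Even when a balance window of positive measure $\varepsilon$ exists, your expected-cut bound is taken over a constant-length $\rho$-interval, and conditioning on the balance window inflates it by a factor $1/\varepsilon$ that nothing in your Markov/H\"older chain controls.

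This is exactly the difficulty you flag as the hard part, but it is not merely a lost derandomization: it breaks the randomized argument as well. In the MTTV proof the stereographic lift to $S^d$ makes \emph{every} great circle through the image of the centerpoint simultaneously balanced, so balance is free and one can optimize cut count alone. Without the lift, fixing a single affine centerpoint $c$ decouples the two requirements in a way your sketch does not reconcile. A correct argument must randomize the center of the cutting body jointly with its scale (or otherwise guarantee, by sampling over the input, a constant-measure family of $(c,\rho)$ pairs that is simultaneously balanced and amenable to the packing/cut-count estimate), and the expectation bound then has to be re-derived over that two-parameter family.
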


\subsection{Optimal disks for low-dimensional Euclidean spaces}

Putting the results of this section together, we have our main theorem:

\begin{theorem}
Given $n$ points in a Euclidean space of constant dimension~$d\ge 2$, we can construct a system of balls centered at the given points, with maximum sum of radii, in time $O(n^{2-1/d})$.
For points in a metric space of dimension~$d$ defined by a convex distance function, we can solve the same problem in randomized expected time $O(n^{2-1/d})$.
\end{theorem}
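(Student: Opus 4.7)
The plan is to assemble the machinery developed throughout the section into a pipeline: replace the complete geometric graph by the sparse subgraph $\NNG$, feed its bipartite double cover into the separator-based matching algorithm, and recover the radii from the dual variables. First, for a point set in a Euclidean space of constant dimension $d$, I would compute all nearest-neighbor distances $\delta_i$ in time $O(n\log n)$ using the results of Clarkson and Vaidya cited at the beginning of the section. With these in hand, the nearest-neighbor balls form a $1$-neighborhood system, so by the combination of \autoref{lem:ply} and the lemmas that follow, $\NNG$ has $O(n)$ edges and has a separator hierarchy with separator size $O(n^{1-1/d})$; by \autoref{lem:deterministic-separator} this hierarchy can be constructed deterministically in total time $O(n\log n)$.

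Next, I would form the bipartite double cover $2\NNG$. This graph has $O(n)$ vertices and $O(n)$ edges, and a separator hierarchy for $\NNG$ lifts to one for $2\NNG$ by including both colored copies of every separator vertex (doubling the sizes but preserving the $O(n^{1-1/d})$ bound and the constant partition ratio). By \autoref{lem:cover-subgraph}, every edge of the minimum cycle cover of the complete geometric graph lies in $\NNG$, so the minimum-weight perfect matching of $2\NNG$ has the same weight as the minimum-weight perfect matching of $2K_n$ (and in particular exists). I would then apply \autoref{lem:sparse-bipartite-matching} with $c=1-1/d$ to $2\NNG$, producing both the matching and its dual variables in time $O(n^{1+(1-1/d)})=O(n^{2-1/d})$.

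To conclude for the Euclidean case, I would feed the dual variables into \autoref{lem:avg-radius}. That lemma's proof shows that averaging the red and blue dual variables at each point yields radii satisfying $r_i+r_j\le d(p_i,p_j)$ for every bipartite edge and summing to half the cycle-cover weight. Since every constraint is imposed by an edge of $\NNG$, \autoref{lem:NNG-conflicts} upgrades this to nonoverlap for every pair of centers, and \autoref{thm:dual} certifies optimality. Summing the costs gives the claimed $O(n^{2-1/d})$ bound, dominated by the matching step.

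For a convex distance function on a $d$-dimensional normed space, the same pipeline goes through: the analogues of the ply, degeneracy, and separator bounds are recorded in the three lemmas of the convex-distance subsection, with \autoref{lem:cdf-separator} providing the separator hierarchy in randomized expected time $O(n\log n)$. The matching step from \autoref{lem:sparse-bipartite-matching} is deterministic once the hierarchy is given, so the total running time becomes randomized expected $O(n^{2-1/d})$. The main subtle point I expect to have to justify carefully is the lifting of the separator hierarchy from $\NNG$ to $2\NNG$ in a way compatible with \autoref{lem:sparse-bipartite-matching}, and the observation that nearest-neighbor computation under a convex distance function still runs in $O(n\log n)$ expected time (which follows from the same overlap-graph constructions used to build $\NNG$ itself).
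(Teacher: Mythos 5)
Your proposal follows the paper's proof step for step: compute nearest neighbors, build $\NNG$ and its separator hierarchy, lift the hierarchy to $2\NNG$ by duplicating separator vertices, run the separator-based matching algorithm of \autoref{lem:sparse-bipartite-matching}, and average the dual variables via \autoref{lem:avg-radius} with \autoref{lem:cover-subgraph} and \autoref{lem:NNG-conflicts} supplying correctness. The one inaccuracy is your closing remark that nearest-neighbor computation under a convex distance function ``follows from the same overlap-graph constructions used to build $\NNG$ itself'': this would be circular, since constructing $\NNG$ presupposes the nearest-neighbor distances $\delta_i$. The paper instead notes that Vaidya's deterministic $O(n\log n)$ algorithm, although stated for $L_p$ metrics, relies only on the property that a bounded number of cells in a grid of congruent cubes can meet any ball whose radius equals the cube diameter, a property that holds for every convex distance function; so the nearest-neighbor step stays deterministic and the randomization enters only through \autoref{lem:cdf-separator}.
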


\begin{proof}
We first find the nearest neighbors of all points.
For this, the deterministic algorithm of Vaidya~\cite{Vai-DCG-89} takes time $O(n\log n)$.
It is described by Vaidya for $L_p$ metrics, but depends only on the property that, if the space is partitioned into a grid of congruent cubes, then only a bounded number of cubes can intersect a ball whose radius is the diameter of a single cube. This is true for any convex distance function, so Vaidya's algorithm can be used to find all nearest neighbors for any convex distance function.

We then construct graph $\NNG$ and its separator hierarchy by \autoref{lem:deterministic-separator} (for Euclidean distances) or \autoref{lem:cdf-separator} (for convex distance functions).
We construct the bipartite double cover $2\NNG$ and a system of separators for it that are formed by doubling the separators in $\NNG$. Using these separators, we apply the matching algorithm of \autoref{lem:sparse-bipartite-matching} to $2\NNG$, giving two dual variables per input point. Finally, we take the average of these two dual variables to give a single radius for each point.
\end{proof}

\section{Lower-bounding the radii}
\label{sec:constrained}

As discussed in \autoref{sec:label}, in the map-labeling application of our problem it is helpful to constrain the radii of the disks or balls to all be at least $\delta$, for some given value $\delta>0$ (small enough that all $\delta$-balls are non-overlapping), so that we avoid zero-radius or overly-small balls in our solution.

This constrained problem, for any given value of $\delta$, can be transformed to an instance of our unconstrained radius-sum optimization problem by modifying the input distances, as follows.

\begin{itemize}
\item Form a collection of balls of radius $\delta$ around each given point $p_i$, and let $d^\circ(p_i,p_j)=d(p_i,p_j)-2\delta$ be the distance between each pair of balls. Then $d^\circ$ is non-negative and symmetric, but it might not obey the triangle inequality, which we will need in our algorithms (in particular, in \autoref{lem:dual-equality}).
\item Let $d^*$ be the shortest path distance in a complete geometric graph whose edge weights are $d^\circ$. The distance $d^*$ defined in this way is non-negative, symmetric, and obeys the triangle inequality.
\item Find a system of non-negative radii $r_i^*$ of disjoint metric balls for $d^*$, maximizing the sum of radii.
\item Set $r_i=r_i^*+\delta$.
\end{itemize}

\begin{theorem}
For $r_i$ computed by the steps outlined above,
the balls of radius $r_i$ around each point $p_i$ of the original point set all have radius at least $\delta$, and have the maximum total radius of any system of balls of radius at least $\delta$.
\end{theorem}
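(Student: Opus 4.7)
The proof has three pieces: a trivial lower bound, a feasibility check for the original metric, and an optimality argument that relates feasible solutions for $d$ back to feasible solutions for $d^*$.

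First, since each $r_i^*$ is non-negative, we immediately have $r_i=r_i^*+\delta\ge\delta$, so the lower-bound condition is satisfied. Next, for feasibility in $(X,d)$, observe that $d^*(p_i,p_j)\le d^\circ(p_i,p_j)=d(p_i,p_j)-2\delta$ because $d^*$ is a shortest-path distance in a graph whose direct edge weights are the $d^\circ$ values. Combined with the non-overlap constraint $r_i^*+r_j^*\le d^*(p_i,p_j)$ that the $r_i^*$ satisfy by construction, this gives $r_i+r_j=r_i^*+r_j^*+2\delta\le d(p_i,p_j)$, so the balls of radius $r_i$ in $(X,d)$ are non-overlapping.

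The main step is optimality. Suppose $s_i\ge\delta$ is any system of radii satisfying $s_i+s_j\le d(p_i,p_j)$; set $s_i^*=s_i-\delta\ge 0$. I would like to conclude that $s_i^*$ is feasible for the $d^*$-problem, which would give $\sum s_i^*\le\sum r_i^*$ and hence $\sum s_i\le\sum r_i$. For each single edge we have $s_i^*+s_j^*\le d^\circ(p_i,p_j)$, but $d^*$ may be smaller than $d^\circ$ because of shortcutting through intermediate vertices, so this inequality alone is not enough. The key observation is that for any path $p_i=p_{k_0},p_{k_1},\dots,p_{k_m}=p_j$ in the $d^\circ$-weighted complete graph,
\[
\sum_{l=0}^{m-1} d^\circ(p_{k_l},p_{k_{l+1}})\ge\sum_{l=0}^{m-1}\bigl(s_{k_l}^*+s_{k_{l+1}}^*\bigr)=s_i^*+s_j^*+2\sum_{l=1}^{m-1}s_{k_l}^*\ge s_i^*+s_j^*,
\]
using non-negativity of the internal $s_{k_l}^*$. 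Taking the minimum over paths gives $d^*(p_i,p_j)\ge s_i^*+s_j^*$, so $s_i^*$ is indeed feasible for $d^*$.

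The hard part is precisely this path-telescoping inequality: it is the place where the translation-by-$\delta$ interacts non-trivially with the shortest-path closure that was needed to restore the triangle inequality. Once it is in hand, optimality follows because $\sum s_i^*\le\sum r_i^*$ by the optimality of $r_i^*$ in $d^*$, and adding $n\delta$ to both sides yields $\sum s_i\le\sum r_i$, proving that the constructed $r_i$ achieve the maximum total radius among all systems with radii at least $\delta$.
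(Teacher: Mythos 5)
Your proof is correct and uses the same core idea as the paper: the path-telescoping inequality showing that, for any system of nonnegative radii feasible with respect to $d^\circ$, the shifted radii remain feasible with respect to the shortest-path closure $d^*$. Your exposition is slightly more streamlined in that you compare $r^*$ directly against an arbitrary competitor $s_i\ge\delta$, whereas the paper first introduces an intermediate optimum $r^\circ$ for the $d^\circ$-problem and argues that $r^\circ$ and $r^*$ have the same solution set; the content is the same.
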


\begin{proof}
Let $r_i^\circ$ be a system of radii, obeying the constraints that for each pair of points $p_i$ and $p_j$, $r_i^\circ+r_j^\circ\le d^\circ(p_i,p_j)$ and maximizing the sum of radii.
Then the constraint that $r_i^\circ+r_j^\circ\le d^\circ(p_i,p_j)$ is equivalent (by simple algebraic manipulation) to the constraint that
\[
(r_i^\circ+\delta)+(r_j^\circ+\delta)\ge d(p_i,p_j),
\]
or in words that the two ball radii derived from $r_i^\circ$ and $r_j^\circ$ form non-overlapping balls around the original points. Clearly, in addition, these radii are at least $\delta$ and are otherwise unconstrained. So if we could find the system of radii $r_i^\circ$, the radii $r_i^\circ+\delta$ derived from them would solve the problem stated in the theorem. It remains to show that each radius $r_i^*$ found by the procedure outlined above actually equals $r_i^\circ$.

To see this, consider the transformation of distances made in going from $d^\circ$ to $d^*$.
This transformation can only decrease any pairwise distance; that is, for every $i$ and $j$,
\[
d^*(p_i,p_j)\le d^\circ(p_i,p_j).
\]
For, the distance $d^\circ(p_i,p_j)$ is the length of the edge $p_ip_j$ in the complete weighted graph used to compute $d^*$, and this edge forms one of the paths in the shortest path distance computation. Because $d^*$ has distances that are no greater than $d^\circ$, the radii $r^*$ must also be no greater than the corresponding values of $r^\circ$.

To complete the proof, we show that the system of radii $r^\circ$ automatically obey the constraints $r_i^\circ+r_j^\circ\le d^*(p_i,p_j)$ used to determine the radii $r^*$. Since both $r^\circ$ and $r^*$ are defined by optimization problems with the same objective function but with tighter constraints on $r^*$, and since these tighter constraints are already obeyed by $r^\circ$, it will follow that the optimization problems defining $r^\circ$ and $r^*$ have identical solution sets.
Therefore, consider any two points $p_i$ and $p_j$, and their radii $r_i^\circ$ and $r_j^\circ$; we must show that $r_i^\circ+r_j^\circ\le d^*(p_i,p_j)$. To see this, consider the shortest path from $p_i$ to $p_j$ in the complete weighted graph from which we defined $d^*$, and let $S$ be the sum, over all the edges in this path, of the radii $r^\circ$ at the two endpoints of this edge. Then each edge's contribution to the sum is less than its length, for otherwise the two circles of radius $r^\circ$ at its endpoints would violate the constraint that their sum is at most $d^\circ$. Therefore, $S$ is less than the total length of the path, which is $d^*(p_i,p_j)$. But both $r_i^\circ$ and $r_j^\circ$ contribute to $S$ in the terms for the first and last edge to the path, and all the other contributions are non-negative. Therefore, $r_i^\circ+r_j^\circ\le S\le d^*(pi,p_j)$, as desired.
\end{proof}

We leave open for future research the problem of combining this technique for lower-bounding the radii of systems of balls with our speedups for Euclidean radius-sum optimization.

{\small\raggedright
\bibliographystyle{abuser}
\bibliography{radius-sum}}

\begin{thebibliography}{10}
\urlstyle{rm}

\bibitem{AltArkBro-SoCG-06}
H.~Alt, E.~M. Arkin, H.~Br{\"o}nnimann, J.~Erickson, S.~P. Fekete, C.~Knauer,
  J.~Lenchner, J.~S.~B. Mitchell, and K.~Whittlesey.
\newblock {Minimum-cost coverage of point sets by disks}.
\newblock {\em Proc. 22nd Symp. Computational Geometry (SoCG 2006)},
  pp.~449{--}458, 2006, \href{http://dx.doi.org/10.1145/1137856.1137922}%
{doi:\nolinkurl{10.1145/1137856.1137922}},
  \href{https://www.ams.org/mathscinet-getitem?mr=2389355}%
{MR2389355}.

\bibitem{ChaFekHof-WADS-11}
E.~W. Chambers, S.~P. Fekete, H.-F. Hoffmann, D.~Marinakis, J.~S.~B. Mitchell,
  V.~Srinivasan, U.~Stege, and S.~Whitesides.
\newblock {Connecting a set of circles with minimum sum of radii}.
\newblock {\em Proc. 12th Int. Symp. Algorithms and Data Structures (WADS
  2011)}, pp.~183{--}194. Springer, LNCS 6844, 2011,
  \href{http://dx.doi.org/10.1007/978-3-642-22300-6_16}%
{doi:\nolinkurl{10.1007/978-3-642-22300-6_16}},
  \href{https://www.ams.org/mathscinet-getitem?mr=2863135}%
{MR2863135}.

\bibitem{Cla-FOCS-83}
K.~L. Clarkson.
\newblock {Fast algorithms for the all nearest neighbors problem}.
\newblock {\em Proc. 24th IEEE Symp. Foundations of Computer Science (FOCS
  '83)}, pp.~226{--}232, 1983, \href{http://dx.doi.org/10.1109/SFCS.1983.16}%
{doi:\nolinkurl{10.1109/SFCS.1983.16}}.

\bibitem{CohMeg-SICOMP-94}
E.~Cohen and N.~Megiddo.
\newblock {Improved algorithms for linear inequalities with two variables per
  inequality}.
\newblock {\em SIAM J. Comput.} 23(6):1313{--}1347, 1994,
  \href{http://dx.doi.org/10.1137/S0097539791256325}%
{doi:\nolinkurl{10.1137/S0097539791256325}},
  \href{https://www.ams.org/mathscinet-getitem?mr=1303338}%
{MR1303338}.

\bibitem{CooCunPul-CO-98}
W.~J. Cook, W.~H. Cunningham, W.~R. Pulleyblank, and A.~Schrijver.
\newblock {\em {Combinatorial optimization}}.
\newblock Wiley-Interscience Series in Discrete Mathematics and Optimization.
  John Wiley {\&} Sons, 1998,
  \href{https://www.ams.org/mathscinet-getitem?mr=1490579}%
{MR1490579}.

\bibitem{DodMarMir-SODA-97}
S.~Doddi, M.~V. Marathe, A.~Mirzaian, B.~M.~E. Moret, and B.~Zhu.
\newblock {Map labeling and its generalizations}.
\newblock {\em Proc. 8th ACM-SIAM Symp. on Discrete Algorithms (SODA '97)},
  pp.~148{--}157, 1997,
  \href{https://www.ams.org/mathscinet-getitem?mr=1447660}%
{MR1447660}.

\bibitem{DuaSu-SODA-12}
R.~Duan and H.-H. Su.
\newblock {A scaling algorithm for maximum weight matching in bipartite
  graphs}.
\newblock {\em Proc. 23rd ACM-SIAM Symp. on Discrete Algorithms (SODA '12)},
  pp.~1413{--}1424, 2012,
  \href{https://www.ams.org/mathscinet-getitem?mr=3205301}%
{MR3205301}.

\bibitem{DvoNor-SSS-15}
Z.~Dvo{\v{r}}{\'a}k and S.~Norin.
\newblock {Strongly sublinear separators and polynomial expansion}.
\newblock Electronic preprint arxiv:1504.04821, 2015.

\bibitem{EppMilTen-FI-95}
D.~Eppstein, G.~L. Miller, and S.-H. Teng.
\newblock {A deterministic linear time algorithm for geometric separators and
  its applications}.
\newblock {\em Fundamenta Informaticae} 22(4):309{--}331, 1995,
  \href{https://www.ams.org/mathscinet-getitem?mr=1360950}%
{MR1360950}.

\bibitem{EppWor-WADS-09}
D.~Eppstein and K.~A. Wortman.
\newblock {Optimal embedding into star metrics}.
\newblock {\em Proc. Algorithms and Data Structures Symposium (WADS 2009)},
  pp.~290{--}301. Springer-Verlag, Lecture Notes in Computer Science 5664,
  2009, \href{http://dx.doi.org/10.1007/978-3-642-03367-4_26}%
{doi:\nolinkurl{10.1007/978-3-642-03367-4_26}},
  \href{http://arxiv.org/abs/0905.0283}{arXiv:0905.0283},
  \href{https://www.ams.org/mathscinet-getitem?mr=2550615}%
{MR2550615}.
\newblock Winner, best paper award.

\bibitem{FreTar-JACM-87}
M.~L. Fredman and R.~E. Tarjan.
\newblock {Fibonacci heaps and their uses in improved network optimization
  algorithms}.
\newblock {\em J. ACM} 34(3):596{--}615, 1987,
  \href{http://dx.doi.org/10.1145/28869.28874}%
{doi:\nolinkurl{10.1145/28869.28874}},
  \href{https://www.ams.org/mathscinet-getitem?mr=904195}%
{MR904195}.

\bibitem{GibKanKro-Algo-10}
M.~Gibson, G.~Kanade, E.~Krohn, I.~A. Pirwani, and K.~Varadarajan.
\newblock {On metric clustering to minimize the sum of radii}.
\newblock {\em Algorithmica} 57(3):484{--}498, 2010,
  \href{http://dx.doi.org/10.1007/s00453-009-9282-7}%
{doi:\nolinkurl{10.1007/s00453-009-9282-7}},
  \href{https://www.ams.org/mathscinet-getitem?mr=2609050}%
{MR2609050}.

\bibitem{HenKleRao-JCSS-97}
M.~R. Henzinger, P.~Klein, S.~Rao, and S.~Subramanian.
\newblock {Faster shortest-path algorithms for planar graphs}.
\newblock {\em J. Comput. System Sci.} 55(1):3{--}23, 1997,
  \href{http://dx.doi.org/10.1006/jcss.1997.1493}%
{doi:\nolinkurl{10.1006/jcss.1997.1493}},
  \href{https://www.ams.org/mathscinet-getitem?mr=1473046}%
{MR1473046}.

\bibitem{HocNao-SICOMP-94}
D.~S. Hochbaum and J.~Naor.
\newblock {Simple and fast algorithms for linear and integer programs with two
  variables per inequality}.
\newblock {\em SIAM J. Comput.} 23(6):1179{--}1192, 1994,
  \href{http://dx.doi.org/10.1137/S0097539793251876}%
{doi:\nolinkurl{10.1137/S0097539793251876}},
  \href{https://www.ams.org/mathscinet-getitem?mr=1303329}%
{MR1303329}.

\bibitem{IndMat-HDCG-04}
P.~Indyk and J.~Matou{\v{s}}ek.
\newblock {Low-distortion embeddings of finite metric spaces}.
\newblock {\em Handbook of Discrete and Computational Geometry}, 2nd edition,
  pp.~177{--}196. CRC, Discrete Mathematics and its Applications (Boca Raton),
  2004, \href{http://dx.doi.org/10.1201/9781420035315.ch8}%
{doi:\nolinkurl{10.1201/9781420035315.ch8}}.

\bibitem{Isb-CMH-64}
J.~R. Isbell.
\newblock {Six theorems about injective metric spaces}.
\newblock {\em Comment. Math. Helv.} 39:65{--}76, 1964,
  \href{http://dx.doi.org/10.1007/BF02566944}%
{doi:\nolinkurl{10.1007/BF02566944}}.

\bibitem{JiaBerQin-ISAAC-04}
M.~Jiang, S.~Bereg, Z.~Qin, and B.~Zhu.
\newblock {New bounds on map labeling with circular labels}.
\newblock {\em Proc. 15th Int. Symp. Algorithms and Computation (ISAAC 2004)},
  pp.~606{--}617. Springer, LNCS 3341, 2004,
  \href{http://dx.doi.org/10.1007/978-3-540-30551-4_53}%
{doi:\nolinkurl{10.1007/978-3-540-30551-4_53}},
  \href{https://www.ams.org/mathscinet-getitem?mr=2158365}%
{MR2158365}.

\bibitem{JunPul-Algo-95}
M.~J{\"u}nger and W.~Pulleyblank.
\newblock {New primal and dual matching heuristics}.
\newblock {\em Algorithmica} 13(4):357{--}380, 1995,
  \href{http://dx.doi.org/10.1007/BF01293485}%
{doi:\nolinkurl{10.1007/BF01293485}},
  \href{https://www.ams.org/mathscinet-getitem?mr=1318310}%
{MR1318310}.

\bibitem{LipTar-SICOMP-80}
R.~J. Lipton and R.~E. Tarjan.
\newblock {Applications of a planar separator theorem}.
\newblock {\em SIAM J. Comput.} 9(3), 1980,
  \href{http://dx.doi.org/10.1137/0209046}%
{doi:\nolinkurl{10.1137/0209046}},
  \href{https://www.ams.org/mathscinet-getitem?mr=584516}%
{MR584516}.

\bibitem{NesOss-SGSA-12}
J.~Ne{\v{s}}et{\v{r}}il and P.~Ossona~de Mendez.
\newblock {\em {Sparsity: Graphs, Structures, and Algorithms}}.
\newblock Algorithms and Combinatorics~28. Springer, 2012,
  \href{http://dx.doi.org/10.1007/978-3-642-27875-4}%
{doi:\nolinkurl{10.1007/978-3-642-27875-4}},
  \href{https://www.ams.org/mathscinet-getitem?mr=2920058}%
{MR2920058}.

\bibitem{Str-IJCGA-01}
T.~Strijk and A.~Wolff.
\newblock {Labeling points with circles}.
\newblock {\em Int. J. Comput. Geom. Appl.} 11(02):181{--}195, 2001,
  \href{http://dx.doi.org/10.1142/s0218195901000444}%
{doi:\nolinkurl{10.1142/s0218195901000444}},
  \href{https://www.ams.org/mathscinet-getitem?mr=1831031}%
{MR1831031}.

\bibitem{Tar-DSNA-83}
R.~E. Tarjan.
\newblock {\em {Data Structures and Network Algorithms}}.
\newblock CBMS-NSF Regional Conference Series in Applied Mathematics~44. SIAM,
  1983, \href{http://dx.doi.org/10.1137/1.9781611970265}%
{doi:\nolinkurl{10.1137/1.9781611970265}},
  \href{https://www.ams.org/mathscinet-getitem?mr=826534}%
{MR826534}.

\bibitem{Ten-PhD-91}
S.-H. Teng.
\newblock {\em {Points, spheres, and separators: a unified approach to graph
  partitioning}}.
\newblock Ph.D. thesis, Carnegie-Mellon Univ., School of Computer Science,
  1991, \href{https://www.ams.org/mathscinet-getitem?mr=2687483}%
{MR2687483}.

\bibitem{Vai-DCG-89}
P.~M. Vaidya.
\newblock {An $O(n\log n)$ algorithm for the all-nearest-neighbors problem}.
\newblock {\em Discrete Comput. Geom.} 4(2):101{--}115, 1989,
  \href{http://dx.doi.org/10.1007/BF02187718}%
{doi:\nolinkurl{10.1007/BF02187718}},
  \href{https://www.ams.org/mathscinet-getitem?mr=973540}%
{MR973540}.

\bibitem{Vaz-AA-01}
V.~V. Vazirani.
\newblock {\em {Approximation Algorithms}}.
\newblock Springer, 2001, \href{http://dx.doi.org/10.1007/978-3-662-04565-7}%
{doi:\nolinkurl{10.1007/978-3-662-04565-7}},
  \href{https://www.ams.org/mathscinet-getitem?mr=1851303}%
{MR1851303}.

\bibitem{WagWol-CGTA-97}
F.~Wagner and A.~Wolff.
\newblock {A practical map labeling algorithm}.
\newblock {\em Comp. Geom. Th. {\&} Appl.} 7(5-6):387{--}404, 1997,
  \href{http://dx.doi.org/10.1016/s0925-7721(96)00007-7}%
{doi:\nolinkurl{10.1016/s0925-7721(96)00007-7}},
  \href{https://www.ams.org/mathscinet-getitem?mr=1447248}%
{MR1447248}.

\end{thebibliography}

\end{document}